\begin{document}

\title{On 1-factorizations of Bipartite Kneser Graphs}

\author{Kai Jin\inst{1}\orcidID{0000-0003-3720-5117}}

\authorrunning{K. Jin}

\institute{The Hong Kong University of Science and Technology, Hong Kong SAR
\email{cscjjk@gmail.com}}
\maketitle              % typeset the header of the contribution

%\subjclass{G.2.1 Combinatorics; G.2.2 Graph Theory; F.2.2 Nonnumerical Algorithms and Problems; I.2.11 Distributed Artificial Intelligence}

\newcommand{\PA}{\mathsf{PA}}
\newcommand{\CPA}{\mathsf{CPA}}

\begin{abstract}
It is a challenging open problem to construct an explicit 1-factorization of the bipartite Kneser graph $H(v,t)$,
  which contains as vertices all $t$-element and $(v-t)$-element subsets of $[v]:=\{1,\ldots,v\}$
  and an edge between any two vertices when one is a subset of the other.
In this paper, we propose a new framework for designing such 1-factorizations,
  by which we solve a nontrivial case where $t=2$ and $v$ is an odd prime power.
We also revisit two classic constructions for the case $v=2t+1$ ---
 the \emph{lexical factorization} and \emph{modular factorization}.
 We provide their simplified definitions and study their inner structures. As a result, an optimal algorithm is designed for computing the lexical factorizations. (An analogous algorithm for the modular factorization is trivial.)

\keywords{Graph Theory: 1-factorization \and Modular factorization \and Lexical factorization \and Bipartite Kneser graph \and Perpendicular Array.}
\end{abstract}

\newcommand{\mod}{~\hbox{mod}~}

\section{Introduction}

The \emph{bipartite Kneser graph} $H(v,t)~(t<v/2)$ has as vertices all $t$-element and $(v-t)$-element subsets of $[v]:=\{1,\ldots,v\}$
  and an edge between any two vertices when one is a subset of the other.
Because it is regular and bipartite, each bipartite Kneser graph admits a 1-factorization due to Hall's Marriage Theorem \cite{Hall35}.
 (A 1-factor of a graph $G$ is a subgraph in which each node of $G$ has degree 1, and a 1-factorization of $G$ partitions the edges of $G$ into disjoint 1-factors.)
For the special case $v=2t+1$, the graph $H(2t+1,t)$ is also known as the \emph{middle level graph} and it admits two explicit 1-factorizations --
   the \emph{lexical factorization} \cite{LEXCIAL88} (see subsection~\ref{subsect:prel}) and \emph{modular factorization} \cite{MODULAR94} (see section~\ref{sect:modular}).
However, to the best of our knowledge, for decades it remains a challenging open problem to design explicit 1-factorizations for the general bipartite Kneser graphs.

\smallskip In this paper, we propose a natural framework to attack the open problem.
  Briefly, it attempts to find a special kind of 1-factorizations called resolvable 1-factorizations.
  We noticed that the lexical and modular factorizations and any 1-factorization of $H(2t+1,t)$ are resolvable.
  We also checked (by a \emph{C++} program) that there are no resolvable 1-factorization for $(v,t)=(6,2)$.
  Therefore, we can only expect for solving part of the open problem by using this framework.

    As our main result, Theorem~\ref{thm:main} states that finding a resolvable 1-factorization of $H(v,t)$ is equivalent to
      designing a special type of combinatorial designs, called perpendicular arrays \cite{TheoryPA,CRCHANDBOOk}.
      In particular, $\CPA(t,t+d,2t+d)$, where $d=v-2t$.
    According to this theorem and by using the known perpendicular arrays found in \cite{PA-rao,PA-3homo},
      we obtain the first resolvable 1-factorizations of $H(v,t)$ when $t=2$ and $t$ is an odd prime power or when $(t,d)\in \{3,8\},\{3,32\}$.
    On the other direction, we use the lexical and modular factorizations to
       obtain the first explicit constructions of $\CPA(t,t+1,2t+1)$, which are known to be existed in \cite{PA-larget}.

\medskip In addition to the construction of the new factorizations, we conduct a comprehensive study of the existing factorizations of the middle level graph mentioned above, which serves as part of an ongoing effort to solve the general case.

First, we unveil an inner structure of the lexical factorization, which leads to not only the first constructive proof for the fact that the lexical factorization is well-defined,
      but also an optimal algorithm for the following computational problem:
    Given $i$ and a $t$-element subset $A$, find the unique $A'$ such that $(A,A')$ belongs to the $i$-th 1-factor of the lexical factorization.
    The case $i=t+1$ of this problem was studied in \cite{Antipodal-code}. For $i\leq t$ it becomes more difficult and a trivial algorithm takes $O(v^2)$ time in the RAM model (where an atomic operation on a word accounts $O(1)$ time). We improve it to optimal $O(v)$ time (in section~\ref{sect:lexical}).
    (An $O(v)$ time algorithm for this problem on modular factorization is trivial.)

Second, we propose an intuitive definition of the modular factorization (in section~\ref{sect:modular}), which establishes an interesting connection between this factorization and the \emph{inversion number of permutations} (section 5.3 of \cite{book-KT}).
  As it is simpler than the original definition in most aspects, a few existing results about the modular factorization become more transparent with this new definition.

Also, we prove properties called \emph{variation laws} for the known 1-factorizations.

We will see the alternative definitions, inner structure, and variation laws are important for understanding the existing 1-factorizations. They have not been reported in literature and obtaining them requires nontrivial analysis.

\subsection{Motivation \& related work}\label{subsect:moti}
A 1-factor of the bipartite Kneser graph is also known as an antipodal matching in the subset lattice.
It is strongly related to the \emph{set inclusion matrix} introduced in \cite{inc-matrix1},
  which has connections to $t$-design in coding theory (see \cite{inc-matrix2,anotherchain} and the references within).
See \cite{Antipodal-code} for its another application in coding theory.

The 1-factorization problem of the middle level graph was motivated by the \emph{middle level conjecture}, which states that all the middle level graphs are Hamiltonian. It was hoped that people can find two 1-factors which form a Hamiltonian cycle \cite{LEXCIAL88}.
  Yet after extensive studies for thirty years the conjecture itself was settled by M\"{u}tze \cite{MIDLEVEL16};
  see also \cite{midlevel18} for a recent and shorter proof and see \cite{Mutze17soda} for an optimal algorithm for computing such a Hamiltonian cycle.
  Moreover, M\"{u}tze and Su \cite{MS17-H} settles the Hamiltonian problem for all the bipartite Kneser graphs.

\smallskip We give new applications of the 1-factorizations of $H(v,t)$ in hat-guessing games. We show that an optimal strategy in the unique-supply hat-guessing games can be designed from a 1-factorization of $H(v,t)$. To make the strategy easy to play, such a 1-factorization must be simple or at least admit an explicit construction. The details of this application are given in \ref{sect:hatguessing-motivation} due to space limits.

\subsection{Preliminaries}\label{subsect:prel}

\newcommand{\calP}{\mathcal{P}}
The \emph{subset lattice} is the family of all subsets of $[v]$, partially ordered by inclusion.
Let $\calP_t$ denote the $t$-th layer of this subset lattice, whose members are the $t$-element subsets of $[v]$.
Throughout the paper, denote $d=v-2t$. Let the words clockwise and counterclockwise be abbreviated as CW and CCW respectively.

\smallskip \noindent \textbf{\underline{A representation of the edges of $H(v,t)$}.\mbox{  }}
We identify each edge $(A,A')$ of $H(v,t)$ by a permutation $\rho$ of $t$ $\fullmoon$'s, $t$ $\triangle$'s, and $d$ $\times$'s: the (positions of) $t$ `$\fullmoon$'s indicate the $t$ elements in $A$; the $t$ `$\triangle$'s indicate the $t$ elements that are \textbf{not} in $A'$ (recall that $A'$ has $v-t$ elements); and the `$\times$'s indicate those in $A'-A$. We do not distinguish the edges with their corresponding permutations.

\smallskip Denote $[t \fullmoon, t \triangle, d \times]$ as the multiset of $2t+d$ characters with $t$ `$\fullmoon$'s, $t$ `$\triangle$'s, and $d$ `$\times$'s.
Giving a 1-factorization of $H(v,t)$ is equivalent to giving a \textbf{labeling function} $f$ from the ${2t+d}\choose {t,t,d}$ permutations of $[t \fullmoon, t \triangle, d \times]$ to $1,\ldots,{t+d \choose d}$ so that
\begin{enumerate}
\item[\textbf{(a)}] $f(\rho)\neq f(\sigma)$ for those pairs $\rho,\sigma$ who admit the same positions for $t$ $\fullmoon$'s; and
\item[\textbf{(b)}] $f(\rho)\neq f(\sigma)$ for those pairs $\rho,\sigma$ who admit the same positions for $t$ $\triangle$'s.
\end{enumerate}
If (a) and (b) hold, for fixed $i$, all edges labeled by $i$ constitute a 1-factor, denoted by $F_{f,i}$,
and $F_{f,1},\ldots,F_{f,{t+d\choose d}}$ constitute a 1-factorization of $H(v,t)$.

\newcommand{\flex}{f_{\mathsf{LEX}}}
\newcommand{\calL}{\mathcal{L}}

An example of the labelling function that satisfies (a) and (b) is given in \cite{LEXCIAL88}:

\medskip\noindent \textbf{\underline{The lexical factorization}\cite{LEXCIAL88}.\mbox{  }}
Let $\rho=(\rho_1,\ldots,\rho_{2t+1})$ be any permutation of $[t \fullmoon, t \triangle, 1 \times]$.
Arrange $\rho_1,\ldots,\rho_{2t+1}$ in a cycle in CW order.
For any $\rho_j$ that equals $\fullmoon$, it is \emph{positive} if there are strictly more $\fullmoon$'s than $\triangle$'s
in the interval that starts from the unique $\times$ and ends at $\rho_j$ in CW order.
The number of positive $\fullmoon$'s modular $t+1$ is defined to be
$\flex(\rho)$ (here, we restrict the remainder to $[t+1]$ by mapping 0 to $t+1$).
See Fig.~\ref{fig:flex}.
It is proved in \cite{LEXCIAL88} that $\flex$ satisfies the above conditions (a) and (b).
We provide in section~\ref{sect:lexical} a more direct proof for this.
The lexical factorization is $\{\calL_1,\ldots,\calL_{t+1}\}$, where $\calL_i=F_{\flex,i}$.

\begin{figure}[h]
  \centering
  \includegraphics[width=.6\textwidth]{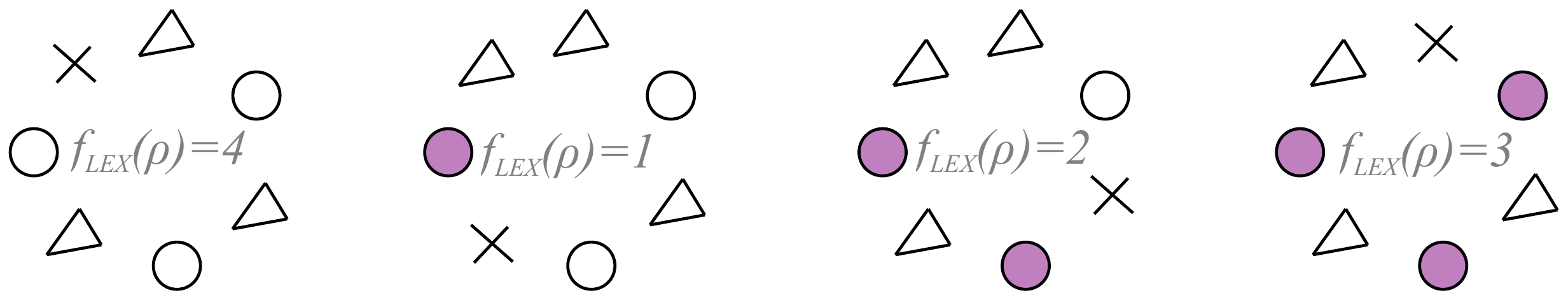}\\
  \caption{Illustration of the definition of $\flex$.
  In the graph, the solid circles indicate positive $\fullmoon$'s.
  Note that the positions of $\fullmoon$'s are identical in all the permutations drawn here.
  As we see, the four permutations are mapped to different numbers under $\flex$.}\label{fig:flex}
\end{figure}

\noindent \textbf{Note}: The original definition \cite{LEXCIAL88} of $\flex(\rho)$ actually calculates the number of nonnegative $\triangle$'s rather than positive $\fullmoon$'s.
For $\rho_j=\triangle$, it is said \emph{nonnegative} if there the number of $\fullmoon$'s is no less than the number of $\triangle$'s
 in the interval that starts from the unique $\times$ and ends at $\rho_j$ in CW order.
Nevertheless, it is clear that the number of nonnegative $\triangle$'s is the same as the number of positive $\fullmoon$'s.

\smallskip \noindent \textbf{Note}: The original definition \cite{LEXCIAL88} use $\calL_0$ to denote $\calL_{t+1}$.
In this paper, however, we choose $\calL_{t+1}$ instead of $\calL_0$ to make it consistent with the case $d>1$.
\newcommand{\calM}{\mathcal{M}}
\newcommand{\fm}{f_{\mathsf{mod}}}
\newcommand{\fmod}{f_{\mathsf{MOD}}}

\section{Construct ``resolvable'' 1-factorizations of $H(v,t)$}\label{sect:resolvable-construct}

This section introduces resolvable 1-factorizations of $H(v,t)$ and constructs some of them using combinatorial designs called perpendicular arrays (defined below).

\begin{definition}\label{def:f-gamma}
Assume $\{g_A \mid A \in \calP_t\}$ is a group of functions where $g_A$ is a bijection from $A^C=[v]-A$ to $[t+d]$ for every $A\in \calP_t$.
Assume $\gamma$ is a bijection from the set of all $d$-element subsets of $[t+d]$ to $1,\ldots, {t+d\choose d}$.
We define a labeling function $f_{\gamma,g}$ on the edges of $H(v,t)$ as follows: $f_{\gamma,g}(A,A'):=\gamma(g_A(A'-A))$.
\end{definition}

\noindent Note 1: Throughout, we use $g_A(X)$ to denote $\bigcup_{x\in X}g_A(x)$ for any $X\subseteq A^C$.

\noindent Note 2: Function $f_{\gamma,g}$ satisfies condition (a) trivially. Yet in most cases it does not satisfy condition (b) and hence does not define a 1-factorization of $H(v,t)$.

\begin{definition}\label{def:resolvable}
Let $f$ be the labelling function of a 1-factorization of $H(v,t)$.
We say $f$ is \emph{resolvable} if there are $\{g_A \mid A \in \calP_t\}$ and $\gamma$ as mentioned in Definition~\ref{def:f-gamma} such that
$f(A,A')\equiv \gamma(g_A(A'-A))$. In this case, we call $g_A$'s for $A \in \calP_t$ the \emph{resolved functions} of $f$
 and we say the 1-factorization defined by $f$ is \emph{resolvable}.
\end{definition}

\begin{remark}
Among other merits which make the resolvable 1-factorizations more interesting than the general ones,
  a resolvable 1-factorization takes only $(t+d)$ over ${t+d\choose d}$ fraction of storing space comparing to a general 1-factorization.
\end{remark}

The proofs of the following two lemmas are put into \ref{sect:resolvable-proof} due to space limits.

\begin{lemma}\label{lemma:some-resolvable-cases}
\begin{enumerate}
\item Any 1-factorization of $H(v=4,t=1)$ is resolvable.
\item Any 1-factorization of $H(2t+1,t)$, including the lexical factorization and modular factorization, is resolvable. (This claim is actually trivial.)
\item No 1-factorization of $H(6,2)$ is resolvable. (Will be proved by a program.)
\end{enumerate}
\end{lemma}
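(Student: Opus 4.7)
The plan is to dispatch the three claims in turn, handling the first two structurally and the third by computer search. For claim~(2), when $d=1$ the $d$-subsets of $[t+1]$ are exactly singletons, so $\gamma$ reduces to a permutation of $[t+1]$; take $\gamma(\{i\})=i$. For each $A\in\calP_t$ I would then define $g_A(b):=f(A,A\cup\{b\})$ for $b\in A^C$. Property~(a) of $f$ forces the $t+1$ values $\{f(A,A\cup\{b\})\mid b\in A^C\}$ to be all of $[t+1]$, so $g_A$ is a bijection, and $f(A,A')=\gamma(g_A(A'-A))$ holds by construction. This is precisely why the authors call the claim trivial.

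For claim~(1) we have $(v,t,d)=(4,1,2)$, so $|A^C|=3$ and the key observation is that a three-element set is self-dual under complementation: the map $c\mapsto A^C-\{c\}$ is a canonical bijection between elements of $A^C$ and its $2$-subsets, and analogously for $[3]$. Under these identifications, every bijection $\beta$ from the $2$-subsets of $A^C$ to the $2$-subsets of $[3]$ lifts uniquely to a bijection $\tilde\beta:A^C\to[3]$, by setting $\tilde\beta(c)$ to be the unique element of $[3]$ missing from $\beta(A^C-\{c\})$. I would then fix an arbitrary $\gamma$, set $\ell_A(X):=f(A,A\cup X)$ (a bijection by property~(a)), and take $g_A$ to be the lift of $\gamma^{-1}\circ\ell_A$ supplied by the observation above. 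The identity $f(A,A')=\gamma(g_A(A'-A))$ then holds on every edge of $H(4,1)$.

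For claim~(3) I would resort to a finite search, as the authors do. Fix $\gamma$ once (any relabeling of $[4]$ is absorbed into a reindexing of the $g_A$), then backtrack over assignments of a bijection $g_A\in S_4$ to each of the fifteen $A\in\calP_2$, pruning as soon as condition~(b) fails on some pair $(A_1,A_2)\in\calP_2\times\calP_2$ sharing a common neighbor $A'\in\calP_4$. Quotienting by the $S_6$-symmetry of $[6]$, for example by fixing $g_{\{1,2\}}$ modulo its stabilizer, cuts the search space further. The main obstacle here is purely engineering rather than mathematical: choosing a branching order and pruning scheme tight enough that the exhaustive enumeration terminates quickly, thereby certifying that no choice of $\gamma$ and $\{g_A\}$ yields a labelling $f_{\gamma,g}$ that also satisfies condition~(b), which together with Definition~\ref{def:resolvable} implies that no 1-factorization of $H(6,2)$ is resolvable.
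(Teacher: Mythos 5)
Your proposal is correct and tracks the paper's proof closely: claim (2) uses the identical definition $g_A(b):=f(A,A\cup\{b\})$ with $\gamma$ the natural singleton bijection, claim (3) is dispatched by the same brute-force computer search, and claim (1) rests on the same combinatorial fact --- that a bijection between the $2$-subsets of two $3$-element sets lifts to a bijection of the underlying sets --- which the paper states less formally as the existence of distinct $a,b,c$ with $\gamma^{-1}(f(\{1\},\{1,2,3\}))=\{a,b\}$, $\gamma^{-1}(f(\{1\},\{1,2,4\}))=\{a,c\}$, $\gamma^{-1}(f(\{1\},\{1,3,4\}))=\{b,c\}$, and then sets $g_{\{1\}}(2)=a$, $g_{\{1\}}(3)=b$, $g_{\{1\}}(4)=c$. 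The only difference is presentational: you package the claim-(1) argument via complementation duality on $3$-element sets, which makes the lifting step a bit more systematic but is mathematically the same observation.
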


As shown by Lemma~\ref{lemma:some-resolvable-cases}, there could be $H(v,t)$'s without a resolvable 1-factorization,
hence we are not always able to design a resolvable 1-factorization of $H(v,t)$.
Nevertheless, the first two claims of Lemma~\ref{lemma:some-resolvable-cases} and the results given in the rest part of this section point out that for several cases we can do so.

\begin{lemma}\label{lemma:regardlessofcolor}
 Given $\{g_A \mid A \in \calP_t\}$ and $\gamma$ as above, the following are equivalent:
\begin{enumerate}
\item[(i)] function $f_{\gamma,g}(A,A')$ satisfies condition (b); and
\item[(ii)] When $A_1\neq A_2$ and $(A_1,A'_1),(A_2,A'_2)$ are two edges in $H(v,t)$, then $g_{A_1}(A'_1-A_1)= g_{A_2}(A'_2-A_2)$ implies that $A'_1\neq A'_2$.
\end{enumerate}
\end{lemma}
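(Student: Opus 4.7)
The plan is to show that both (i) and (ii) are merely two contrapositive phrasings of the statement ``no two distinct edges with the same $A'$-part and distinct $A$-parts get the same $g$-image''. Consequently, the main work is just to unpack the definitions carefully; there is no substantive combinatorial obstacle.

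First I would reinterpret condition (b) through the correspondence between edges and permutations: in an edge $(A,A')$, the $\triangle$'s mark the elements of $[v]\setminus A'$, so two edges have the same $\triangle$-positions if and only if they share the same $A'$. Thus condition (b), applied to $f_{\gamma,g}$, says exactly: for any two edges $(A_1,A'_1)$ and $(A_2,A'_2)$ with $A'_1=A'_2$ and $(A_1,A'_1)\neq (A_2,A'_2)$, we must have $f_{\gamma,g}(A_1,A'_1)\neq f_{\gamma,g}(A_2,A'_2)$. Note that $A'_1=A'_2$ combined with the edges being distinct forces $A_1\neq A_2$.

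Next I would use the hypothesis that $\gamma$ is a bijection to replace $f_{\gamma,g}$-inequality by inequality of the $g$-images: $f_{\gamma,g}(A_1,A'_1)\neq f_{\gamma,g}(A_2,A'_2)$ iff $g_{A_1}(A'_1-A_1)\neq g_{A_2}(A'_2-A_2)$. Combining with the previous step, condition (i) becomes: whenever $A_1\neq A_2$ and $A'_1=A'_2$, one has $g_{A_1}(A'_1-A_1)\neq g_{A_2}(A'_2-A_2)$. Taking the contrapositive of this with respect to the conclusion $A'_1\neq A'_2$ (keeping the premise $A_1\neq A_2$ fixed) gives precisely statement (ii), and vice versa.

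The only thing worth double-checking is the ``edges in $H(v,t)$'' quantifier in (ii): the statement must allow $(A_i,A'_i)$ to range over actual edges, i.e.\ $A_i\subseteq A'_i$ with $|A_i|=t$ and $|A'_i|=v-t$, because only then are the sets $A'_i-A_i$ well-defined $d$-subsets of $A_i^C$ on which $g_{A_i}$ acts. This matches exactly the setting in which (a) and (b) are formulated for $f_{\gamma,g}$, so the two sides of the equivalence live on the same domain, and the argument above goes through cleanly.
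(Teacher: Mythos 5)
Your proposal is correct and follows essentially the same route as the paper: translate condition~(b) into ``edges sharing the same $A'$ must receive distinct labels,'' use the bijectivity of $\gamma$ to pass from an $f_{\gamma,g}$-inequality to a $g$-image inequality, and observe that the resulting statement and (ii) are contrapositives of each other with the premise $A_1\neq A_2$ held fixed. The paper simply carries out the same logic split into two separate directions rather than as a single contrapositive observation.
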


By Lemma~\ref{lemma:regardlessofcolor}, it is independent with the choice of $\gamma$ whether $f_{\gamma,g}(A,A')$ defines a 1-factorization of $H(v,t)$. Therefore, if we want to design a resolvable 1-factorization, the difficulty lies in and only lies in designing $\{g_A \mid A \in \calP_t\}$.\smallskip

By Lemma~\ref{lemma:regardlessofcolor} and Definition~\ref{def:f-gamma}, $H(v,t)$ has a resolvable factorization if and only if there exist resolved functions
$\{g_A \mid A\in \calP_t\}$
such that for $A_1\neq A_2$, $g_{A_1}(A'_1-A_1)= g_{A_2}(A'_2-A_2)$ implies $A'_1\neq A'_2$.
The following theorem shows that finding such functions is equivalent to designing some perpendicular arrays.

\smallskip A \emph{perpendicular array} \cite{TheoryPA,CRCHANDBOOk} with parameters $t,k,v$, denoted by $\PA(t,k,v)$, is a ${v \choose t}\times k$ matrix over $[v]$,
where each row has $k$ distinct numbers and each set of $t$ columns contain each $t$-element subset of $[v]$ as a row exactly once.

\medskip For $d\geq 0$, a $\PA(t,t+d,2t+d)$ is \emph{complete}, hence denoted by $\CPA(t,t+d,2t+d)$,
if each $(t+d)$-element subset of $[2t+d]$ is also contained in exactly one row.

\begin{theorem}\label{thm:main}
$H(2t+d,t)$ has a resolvable 1-factorization $\Leftrightarrow$ $\exists \CPA(t,t+d,2t+d)$.
\end{theorem}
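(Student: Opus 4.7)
The plan is to set up an explicit correspondence between families $\{g_A : A \in \calP_t\}$ satisfying condition (ii) of Lemma~\ref{lemma:regardlessofcolor} (equivalently, resolvable 1-factorizations of $H(v,t)$) and $\CPA(t,t+d,2t+d)$ matrices. Given any such $\{g_A\}$, I would form the matrix $M$ whose rows are indexed by $A \in \calP_t$ and whose $j$-th entry in row $A$ is $g_A^{-1}(j)\in[v]$. Since $g_A$ is a bijection $A^C\to[t+d]$, each row of $M$ consists of the $t+d$ distinct elements of $A^C$, and as $A$ ranges over $\calP_t$ the set $A^C$ hits each $(t+d)$-subset of $[v]$ exactly once. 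So $M$ has $\binom{2t+d}{t}$ rows with distinct row-sets, which already yields the ``complete'' part of the CPA for free.

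The substantive step is to show that the PA condition --- each $t$-subset of $[v]$ occurs exactly once in every fixed choice of $t$ columns --- corresponds to resolvability. Fix a $t$-subset $C\subseteq[t+d]$ of columns; the restriction of row $A$ to $C$ has entry-set $g_A^{-1}(C)\subseteq A^C$. Conversely, a target $t$-subset $B\subseteq[v]$ can only appear in a row $A$ with $A\cap B=\emptyset$, i.e.\ $A\subseteq A'$ where $A':=[v]\setminus B$; and setting $D:=[t+d]\setminus C$, complementation in both $A^C$ and $[t+d]$ converts $g_A^{-1}(C)=B$ into $g_A(A'-A)=D$. Hence the PA condition translates into: for every $(t+d)$-subset $A'\subseteq[v]$ and every $d$-subset $D\subseteq[t+d]$, exactly one $t$-subset $A\subseteq A'$ satisfies $g_A(A'-A)=D$.

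The sets $\{A\in\calP_t:A\subseteq A'\}$ and $\{D\subseteq[t+d]:|D|=d\}$ both have cardinality $\binom{t+d}{t}=\binom{t+d}{d}$, so ``exactly one preimage for every $D$'' is equivalent to injectivity of the map $A\mapsto g_A(A'-A)$ on $\{A\subseteq A':|A|=t\}$, and injectivity for every $A'$ is precisely the contrapositive of Lemma~\ref{lemma:regardlessofcolor}(ii). This settles the $(\Rightarrow)$ direction. For $(\Leftarrow)$, the correspondence is read backwards: from a CPA, each row specifies an ordering of a $(t+d)$-subset $S\subseteq[v]$; setting $A:=[v]\setminus S$ and letting $g_A:S\to[t+d]$ record the position of each element in the row recovers a resolved function system whose resolvability follows from the PA property by the same chain of equivalences.

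I expect no conceptual obstacle; the only care required lies in simultaneously managing the three complementations (between $A$ and $A^C$ in $[v]$, between $B$ and $A'$ in $[v]$, and between $C$ and $D$ in $[t+d]$) and in invoking the cardinality identity $\binom{t+d}{t}=\binom{t+d}{d}$ to promote injectivity to bijectivity. Once the bookkeeping is set up, both implications collapse to the same equivalence.
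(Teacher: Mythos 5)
Your proposal is correct and follows essentially the same route as the paper: the same matrix $M_{A,j}=g_A^{-1}(j)$, the same complementation bookkeeping between $(B,C)$ and $(A',D)$, and the same reduction to Lemma~\ref{lemma:regardlessofcolor}(ii). The only stylistic difference is that you run both implications through a single chain of equivalences and make the counting step (injectivity $\Rightarrow$ bijectivity via $\binom{t+d}{t}=\binom{t+d}{d}$) explicit, which the paper leaves implicit when asserting that distinctness of the $t$-column projections suffices for the PA property.
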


\begin{proof}
$\Rightarrow$: Assume $f$ is the labeling function of a resolvable 1-factorization of $H(v=2t+d,t)$.
Then, $f(A,A')\equiv \gamma(g_A(A'-A))$ for some resolved functions $\{g_A \mid A\in \calP_t\}$ and
  a bijection $\gamma$ as mentioned in Definition~\ref{def:f-gamma}.

 We construct a matrix $M$ over $[v]$ as follows.
 For each $A\in \calP_t$, we build a row $(a^{(A)}_1,\ldots,a^{(A)}_{t+d})$ in $M$,
   where $a^{(A)}_i = g^{-1}_A(i)$ (which belongs to $A^C$ and thus belongs to $[v]$).
 As $\calP_t$ has $v\choose t$ elements, the size of matrix $M$ is $v\choose t$ by $k=t+d$.

We now verify that $M$ is a $\PA(t,t+d,2t+d)$. First, since $g^{-1}_A$ is bijective, $a^{(A)}_1,\ldots,a^{(A)}_{t+d}$ are distinct and so each row of $M$ contains $k=t+d$ distinct numbers. Next, for any $t$ columns $i_1,\ldots,i_t$,
we show that
\begin{equation}\label{eqn:row-distinct}
\{a^{(A_1)}_{i_1},\ldots,a^{(A_1)}_{i_t}\}\neq \{a^{(A_2)}_{i_1},\ldots,a^{(A_2)}_{i_t}\}
\end{equation}
for any distinct $A_1,A_2\in \calP_t$. Assume $\{j_1,\ldots,j_d\}=[t+d]-\{i_1,\ldots,i_t\}$.

Let $A'_1=A_1 \biguplus \{g^{-1}_{A_1}(j_1),\ldots,g^{-1}_{A_1}(j_d)\}$ and
 $A'_2=A_2 \biguplus \{g^{-1}_{A_2}(j_1),\ldots,g^{-1}_{A_2}(j_d)\}$.
Clearly, $g_{A_1}(A'_1-A_1)= \{j_1,\ldots,j_d\} = g_{A_2}(A'_2-A_2)$, thus $A'_1\neq A'_2$ by Lemma~\ref{lemma:regardlessofcolor}.
Thus $[v]-A'_1 \neq [v]-A'_2$.
Because $\{j_1,\ldots,j_d\}\biguplus \{i_1,\ldots,i_t\}=[t+d]$, we know
$A_1^C=\{g^{-1}_{A_1}(j_1),\ldots,g^{-1}_{A_1}(j_d)\}$ $\biguplus \{g^{-1}_{A_1}(i_1),\ldots,g^{-1}_{A_1}(i_t)\}$, which implies that $[v]-A'_1= \{g^{-1}_{A_1}(i_1),\ldots,g^{-1}_{A_1}(i_t)\}$.
Similarly, $[v]-A'_2= \{g^{-1}_{A_2}(i_1),\ldots,g^{-1}_{A_2}(i_t)\}$.
Altogether, $\{g^{-1}_{A_1}(i_1),\ldots,g^{-1}_{A_1}(i_t)\}\neq \{g^{-1}_{A_2}(i_1),\ldots,g^{-1}_{A_2}(i_t)\}$, i.e., (\ref{eqn:row-distinct}) holds.

Next, we argue that $M$ is a $\CPA(t,t+d,2t+d)$. This reduces to proving that each row of $M$ forms
  a distinct $(t+d)$-element subset of $[2t+d]$, which follows from the fact that the row constructed from $A$ is a permutation of $A^C$.

\medskip \noindent $\Leftarrow$: Assume $M$ is a $\CPA(t,t+d,2t+d)$.
First, we construct $\{g_A \mid A\in \calP_k\}$.
For any row $(a_1,\ldots,a_{t+d})$ of $M$, assuming that $A^C=\{a_1,\ldots,a_{t+d}\}$, define $g_A(a_i)=i$ for $i \in [t+d]$.
Obviously, each $g_A$ for $A\in \calP_k$ is defined exactly once.

Below we verify that when $A_1\neq A_2$, equality $g_{A_1}(A'_1-A_1)=g_{A_2}(A'_2-A_2)$ would imply $A'_1\neq A'_2$.
According to Lemma~\ref{lemma:regardlessofcolor}, this further implies that for any $\gamma$ as mentioned in Definition~\ref{def:f-gamma}, $f_{\gamma,g}(A,A')$ is a labeling function satisfying conditions (a) and (b), and hence $H(2t+d,t)$ has a resolvable 1-factorization.

\smallskip Suppose to the opposite that $g_{A_1}(A'-A_1)=g_{A_2}(A'-A_2)=\{j_1,\ldots,j_d\}$.
Assume $\{i_1,\ldots,i_t\}=[t+d]-\{j_1,\ldots,j_d\}$.
Because $g_{A_1}(A'-A_1)=\{j_1,\ldots,j_d\}$, we know $g_{A_1}([v]-A')=\{i_1,\ldots,i_t\}$,
so $[v]-A'=\{g^{-1}_{A_1}(i_1),\ldots,g^{-1}_{A_1}(i_t)\}$.
Similarly, because $g_{A_2}(A'-A_2)=\{j_1,\ldots,j_d\}$, we get $[v]-A'=\{g^{-1}_{A_2}(i_1),\ldots,g^{-1}_{A_2}(i_t)\}$.
Moreover, because $M$ is a $\PA(t,t+d,2t+d)$ where
$\{g^{-1}_{A_1}(i_1),\ldots,g^{-1}_{A_1}(i_t)\}$ and $\{g^{-1}_{A_2}(i_1),\ldots,g^{-1}_{A_2}(i_t)\}$ appear in two rows of $M$ in the columns indexed by $i_1,\ldots,i_t$, these two sets are distinct. Thus $[v]-A'\neq [v]-A'$. Contradiction.  \qed
\end{proof}

\subsection{Applications of Theorem~\ref{thm:main}}

\begin{lemma}\label{lemma:PA-exists}
\begin{enumerate}
\item For $t=1$, there is always a $\PA(t,2t+d,2t+d)$. (trivial)
\item \cite{PA-rao} For $t=2$ and an odd prime power $2t+d$, there is a $\PA(t,2t+d,2t+d)$.
\item \cite{PA-3homo} For $t=3$ and $2t+d\in \{8,32\}$, there is a $\PA(t,2t+d,2t+d)$.
\end{enumerate}
\end{lemma}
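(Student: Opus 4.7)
The plan is to prove the three claims separately. Claim~1 is elementary, while claims~2 and~3 reduce to appeals to known constructions in the design-theory literature, so the task there is only to make the connection from $\PA$-existence explicit.

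For claim~1, I will observe that with $t = 1$ and $k = v = 2 + d$, a $\PA(1, v, v)$ is exactly a Latin square on $[v]$: a $v \times v$ matrix whose rows are permutations of $[v]$ and each of whose single columns contains every element exactly once. As a concrete witness I will exhibit the cyclic Latin square whose $i$-th row is $(i, i+1, \ldots, v, 1, 2, \ldots, i-1)$, and verify the row and column conditions in one line.

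For claims~2 and~3 I will use the standard correspondence between $\PA(t, v, v)$'s and sharply $t$-homogeneous permutation groups on $[v]$: if $G \leq \mathrm{Sym}([v])$ is transitive on $t$-subsets with $|G| = {v \choose t}$, then listing the elements of $G$ row by row (with entry $g(j)$ in column $j$ of row $g$) yields a $\PA(t, v, v)$, since for every $t$-subset $C$ of columns the map $g \mapsto g(C)$ is a bijection onto the $t$-subsets of $[v]$. Claim~2 then follows from Rao \cite{PA-rao}: for every odd prime power $q$, the index-two subgroup of $\mathrm{AGL}(1, q)$ consisting of maps $x \mapsto a x + b$ with $a$ a nonzero square in $\mathbb{F}_q$ is sharply 2-homogeneous of order ${q \choose 2}$. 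Claim~3 follows from \cite{PA-3homo}, which supplies the sharply 3-homogeneous groups on $8$ and $32$ points (the affine-type groups $\mathrm{AGL}(1, 8)$ and $\mathrm{A}\Gamma\mathrm{L}(1, 32)$, of orders $56 = {8 \choose 3}$ and $4960 = {32 \choose 3}$ respectively).

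I do not anticipate any real technical obstacle: claim~1 is a one-line verification, and claims~2 and~3 are direct invocations of the cited references. The only judgement call will be how much of the sharply-$t$-homogeneous-groups background to reproduce in the writeup versus defer to the cited sources; since the lemma is used only as a black-box input to Theorem~\ref{thm:main}, I plan to keep the proof minimal.
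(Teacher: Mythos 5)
The paper does not actually prove this lemma: claims~2 and~3 are stated as direct citations to \cite{PA-rao} and \cite{PA-3homo}, and claim~1 is simply marked trivial, so your overall plan of deferring to the references is in the same spirit as the paper. Your Latin-square observation for claim~1 is correct, your group-to-$\PA$ translation is correct, and for claim~3 the groups $\mathrm{AGL}(1,8)$ and $\mathrm{A\Gamma L}(1,32)$ are indeed sharply $3$-homogeneous of orders $56={8\choose 3}$ and $4960={32\choose 3}$.

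However, the concrete mechanism you supply for claim~2 fails for half the odd prime powers. You assert that for every odd prime power $q$, the subgroup $G=\{\,x\mapsto ax+b: a\in(\mathbb{F}_q^\ast)^2,\ b\in\mathbb{F}_q\,\}$ of $\mathrm{AGL}(1,q)$ is sharply $2$-homogeneous. That is true only when $q\equiv 3\pmod 4$. When $q\equiv 1\pmod 4$, $-1$ is itself a square, so the two candidate multipliers $\pm(y_1-y_2)/(x_1-x_2)$ carrying $\{x_1,x_2\}$ to $\{y_1,y_2\}$ are either both squares or both nonsquares; $G$ is then not even $2$-homogeneous. (Concretely, for $q=5$ the $G$-orbit of $\{0,1\}$ is $\{\{0,1\},\{1,2\},\{2,3\},\{3,4\},\{4,0\}\}$, missing the other five pairs.) Worse, by Kantor's classification of $2$-homogeneous groups there is no sharply $2$-homogeneous group on $q$ points at all when $q\equiv 1\pmod 4$, so the group-theoretic route cannot be repaired for those $q$. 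Rao's construction must be read as genuinely non-group-based: index the rows by pairs $(i,b)$ with $0\le i<(q-1)/2$ and $b\in\mathbb{F}_q$, and place $\omega^i c+b$ in column $c$, where $\omega$ is a fixed primitive root of $\mathbb{F}_q$. The set $\{\omega^i : 0\le i<(q-1)/2\}$ is a transversal of $\{\pm 1\}$ in $\mathbb{F}_q^\ast$ (but not a subgroup unless $q\equiv 3\pmod 4$), which is exactly the property needed for any two columns to realize every unordered pair exactly once, and this works for every odd prime power.
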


The following lemma is trivial and its proof can be found in \ref{sect:resolvable-proof}.
\begin{lemma}\label{lemma:PA-CPA}
Any $t+d$ columns of a $\PA(t,2t+d,2t+d)$ form a $\CPA(t,t+d,2t+d)$.
\end{lemma}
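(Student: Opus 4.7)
The plan is to exploit the key observation that, in a $\PA(t,2t+d,2t+d)$, each row consists of $2t+d$ distinct entries drawn from $[2t+d]$ and so is a permutation of $[2t+d]$. Fix any set $C$ of $t+d$ columns and let $\bar C$ denote the complementary $t$ columns; let $M'$ denote the corresponding submatrix of $M$.

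First I would verify the plain PA axioms for $M'$. Distinctness of entries within each row is inherited directly from $M$. The axiom that every $t$-subset of columns of $M'$ contains each $t$-subset of $[2t+d]$ as a row exactly once is also inherited for free, because any $t$ columns of $M'$ are $t$ columns of the original $\PA$.

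The main content is the completeness condition, and the plan is a complementation bijection. By the PA property applied to the $t$ columns in $\bar C$, every $t$-subset $T$ of $[2t+d]$ appears, when restricted to $\bar C$, in exactly one row $r_T$ of $M$. Since $r_T$ is a permutation of $[2t+d]$, its restriction to $C$ is precisely $[2t+d]-T$. Hence the map $T\mapsto r_T$ yields a bijection between $t$-subsets of $[2t+d]$ and rows of $M$ under which the $C$-restriction of $r_T$ is $[2t+d]-T$; this matches the ${2t+d\choose t}$ rows with the ${2t+d\choose t+d}={2t+d\choose t}$ subsets of size $t+d$, showing that each $(t+d)$-subset appears as a row of $M'$ exactly once.

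There is no real obstacle here; the heart of the matter is simply to notice that the rows of a $\PA(t,2t+d,2t+d)$ are full permutations of $[2t+d]$, after which the CPA condition for any $t+d$ chosen columns follows from applying the original PA condition to the $t$ complementary columns.
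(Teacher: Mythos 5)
Your proof is correct and takes essentially the same route as the paper's: both apply the $\PA$ property to the $t$ complementary columns to see that every $t$-subset appears there exactly once, and then pass to complements within each row (which is a permutation of $[2t+d]$) to conclude that each $(t+d)$-subset appears exactly once among the chosen columns.
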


Lemma~\ref{lemma:PA-CPA} points out a way to construct a $\CPA(t,t+d,2t+d)$. Yet it is unknown whether every $\CPA(t,t+d,2t+d)$ can be constructed this way.
We conjecture so. If so, finding resolvable 1-factorizations reduces to finding $\PA(t,2t+d,2t+d)$'s.

The following is a corollary of Lemma~\ref{lemma:PA-exists}, Lemma~\ref{lemma:PA-CPA}, and Theorem~\ref{thm:main}.
\begin{corollary}
Graph $H(2t+d,t)$ has a resolvable 1-factorization when
1. $t=1$, or 2. ($t=2$ and $2t+d$ is an odd prime power), or 3. ($t=3$ and $2t+d\in \{8,32\}$).
\end{corollary}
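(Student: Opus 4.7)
The plan is simply to chain the three previously established results: Lemma~\ref{lemma:PA-exists} produces a perpendicular array, Lemma~\ref{lemma:PA-CPA} upgrades it to the complete variant needed by the main theorem, and Theorem~\ref{thm:main} converts the $\CPA$ into a resolvable 1-factorization of $H(2t+d,t)$. Since the three cases of the corollary match the three cases of Lemma~\ref{lemma:PA-exists} verbatim, the proof reduces to verifying that the parameters line up in each instance.

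Concretely, I would treat the three cases uniformly. In each one, Lemma~\ref{lemma:PA-exists} yields a $\PA(t,2t+d,2t+d)$: for (1) this is trivial, for (2) it comes from \cite{PA-rao} applied with $2t+d$ an odd prime power and $t=2$, and for (3) it comes from \cite{PA-3homo} with $t=3$ and $2t+d\in\{8,32\}$. Next, by Lemma~\ref{lemma:PA-CPA} any choice of $t+d$ columns of such a $\PA(t,2t+d,2t+d)$ constitutes a $\CPA(t,t+d,2t+d)$. Finally, applying the ``$\Leftarrow$'' direction of Theorem~\ref{thm:main} to this $\CPA(t,t+d,2t+d)$ produces a resolvable 1-factorization of $H(2t+d,t)$, as desired.

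There is essentially no obstacle here, since all the nontrivial work has been absorbed into the referenced results. The only thing one needs to be careful about is bookkeeping of parameters: the value $k=t+d$ in the $\CPA$ must equal the number of columns one selects from the $\PA$, and the value $v=2t+d$ must match the ground set size in both objects. Once these are checked, the conclusion follows immediately from the composition of the three implications.
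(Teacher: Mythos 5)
Your proof is correct and follows precisely the paper's own route: chain Lemma~\ref{lemma:PA-exists}, then Lemma~\ref{lemma:PA-CPA}, then the ``$\Leftarrow$'' direction of Theorem~\ref{thm:main}. The paper states this corollary with exactly this chain and gives no further detail, so your version is essentially the intended proof.
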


The constructions of $\PA(t,2t+d,2t+d)$ for those pairs of $(t,d)$ discussed in Lemma~\ref{lemma:PA-exists} are explicit and quite simple (see \cite{PA-rao,PA-3homo}).
Also, our construction of the resolvable 1-factorization of $H(2t+d,t)$ using a $\CPA(t,t+d,2t+d)$ is extremely simple (as shown in the proof of Theorem~\ref{thm:main}). As a result, the resolvable 1-factorizations of $H(2t+d,t)$ mentioned in this corollary are explicit and simple.

\smallskip Perpendicular arrays have not been studied extensively in literature.
In addition to the existence results mentioned in Lemma~\ref{lemma:PA-exists}, there do exist $\PA(3,5,5)$ and $\PA(t,t+1,2t+1)~(t\geq 1)$ and some other perpendicular arrays.
Yet the construction of $\PA(t,t+1,2t+1)$ (in \cite{PA-larget}) is not explicit and thus not too useful (regarding that we are only interested in explicit factorizations of $H(2t+1,t)$). A $\PA(3,5,5)$ is also useless to us since $5<2\times 3$.
Because a $\CPA(t,t+d,2t+d)$ automatically implies a resolvable 1-factorization of $H(v,t)$,
  we hope that our results motivate more study on the perpendicular arrays in the future.

\paragraph{Another application of Theorem~\ref{thm:main} --- construction of $\CPA(t,t+1,2t+1)$.}
As shown in Lemma~\ref{lemma:some-resolvable-cases}.2, the lexical and modular factorization of $H(2t+1,t)$ are both resolvable.
The resolved functions of $\flex$ and $\fmod$ will be demonstrated in the next sections.
Using these resolved functions and applying the proof of Theorem~\ref{thm:main}, we can easily construct two $\CPA(t,t+1,2t+1)$s.
Therefore, as byproducts, we obtain (the first) explicit constructions of (complete) $\PA(t,t+1,2t+1)$
  (note that \cite{PA-larget} only showed the existence of $\PA(t,t+1,2t+1)$).

\section{Revisit the lexical factorization}\label{sect:lexical}

Recall $\flex$ in subsection~\ref{subsect:prel}, which is a labeling function of $H(2t+1,t)$.
In this section, we first give $\{g_A\}$ of $\gamma$ so that $\flex=f_{\gamma,g}$.
Based on this formula we then show that $\flex$ satisfies (a) and (b) and thus that it indeed defines a 1-factorization.
Moreover, by applying $\flex=f_{\gamma,g}$, we design optimal algorithms for solving
  two fundamental computational problems about this factorization (P1 and P2 below).
Finally, we introduce a group of \emph{variation laws} of $\flex$.

\begin{enumerate}
\item[P1.] Given $A\in \calP_t$ and $i\in \{1,\ldots,t+1\}$, how do we find the unique $A'$ so that $(A,A')\in \calL_i$?
    In other words, given number $i$ and the positions of $\fullmoon$'s in $\rho$ and suppose $\flex(\rho)=i$,
        how do we determine the position of $\times$ in $\rho$?\medskip
\item[P2.] Given a $A'\in \calP_{t+1}$ and $i\in \{1,\ldots,t+1\}$, how do we find the unique $A$ so that $(A,A')\in \calL_i$?
    In other words, given number $i$ and the positions of $\triangle$'s in $\rho$ and suppose $\flex(\rho)=i$ ,
        how do we determine the position of $\times$ in $\rho$?
\end{enumerate}

\subsection{Preliminary lemmas}\label{subsect:flex-pre}

The two lemmas given in this subsection are trivial; proofs can be found in \ref{sect:lexical-proof}.

\smallskip Given $S=(s_1,\ldots,s_v)$, the $j$-th $(0\leq j<v)$ \emph{cyclic-shift} of $S$ is
$S^{(j)}:=(s_{1+j},\ldots,s_{v+j})$, where subscripts are taken modulo $v$ (and restricted to $[v]$).

\begin{lemma}\label{lemma:cyclic-unique}
Given any sequence $S$ of $t$ of right parentheses `$)$' and $t+1$ left parentheses `$($'.
There exists a unique cyclic-shift $S^{(j)}$ of $S$ whose first $2t$ parentheses are paired up when parenthesized,
and we can compute $j$ in $O(t)$ time.
\end{lemma}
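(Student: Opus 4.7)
The plan is to reduce this to the classical Cycle Lemma of Dvoretzky--Motzkin. Encode `$($' as $+1$ and `$)$' as $-1$ to obtain integers $(c_1,\ldots,c_{2t+1})$ summing to $(t+1)-t = 1$. Let $D^{(j)}_m$ denote the $m$-th prefix sum of $S^{(j)}$ under this encoding. I would first observe the crucial characterization: the first $2t$ characters of $S^{(j)}$ are properly paired if and only if $D^{(j)}_m \geq 0$ for all $m \leq 2t$ and $D^{(j)}_{2t} = 0$; since the total sum over all $2t+1$ entries is $+1$, this automatically forces $S^{(j)}_{2t+1}$ to equal `$($' (and that character is the one left unmatched).

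By the Cycle Lemma, since $\sum c_i = 1$, there is a unique shift $S^{(j^*)}$ all of whose prefix sums are strictly positive. I claim the desired index is $j = j^* + 1 \pmod{2t+1}$. For existence: $D^{(j^*)}_1 \geq 1$ and $c_{j^*+1}\in\{\pm 1\}$ force the first character of $S^{(j^*)}$ to be `$($', whence $D^{(j)}_m = D^{(j^*)}_{m+1} - 1 \geq 0$ for all $m\leq 2t$ and $D^{(j)}_{2t} = D^{(j^*)}_{2t+1} - 1 = 0$. For uniqueness: any $j$ satisfying our condition has $S^{(j)}_{2t+1} =$ `$($', and a direct calculation shows $j^* := j-1 \pmod{2t+1}$ fulfills the Cycle Lemma's strict-positivity condition, whose solution is unique.

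For the $O(t)$ algorithm I would sweep once over $S$ to compute the cumulative sums $C_m := c_1+\cdots+c_m$, tracking the \emph{last} index $m^*$ at which $C_m$ attains its minimum. Then output $j := m^* + 1 \pmod{2t+1}$. The standard verification splits by whether a prefix of $S^{(m^*)}$ wraps around: in either case the prefix sum evaluates to a difference $C_a - C_{m^*}$ (possibly shifted by $+1$ after wrap-around), and this is $\geq 1$ by the ``last occurrence'' rule combined with integrality of the $C_m$'s. Hence $m^*$ is the Cycle Lemma shift $j^*$, and the total work is $O(t)$ since $|S|=2t+1$.

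The only real subtlety is the one-position offset between the Cycle Lemma's formulation (``all prefix sums strictly positive'') and ours (``first $2t$ form a Dyck word, the $(2t{+}1)$-st is unmatched'') together with the need to pick the last rather than the first occurrence of the minimum; either reversed choice produces an off-by-one. Beyond that, the lemma is a direct packaging of well-known combinatorics on lattice paths.
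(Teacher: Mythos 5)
Your proof is correct and takes essentially the same approach as the paper: both track the extremum of the prefix-sum sequence (you track the last minimum of $C_m$; the paper tracks the last maximum of $H_i = -C_m$), output the index one past it, and the paper explicitly notes at the end that the lemma also follows from the Cycle Lemma/Raney's Lemma, which is exactly the framework you invoke.
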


\begin{example}\label{example:S}
Assume $t=9$, $S=(_1(_2)_3)_4)_5(_6(_7(_8)_9)_{10}(_{11})_{12})_{13}(_{14}(_{15})_{16}(_{17}(_{18})_{19}$.

The unique cyclic-shift in which the first $2t$ parentheses are paired up is:
$$S^{(14)}=\bigg(_{15}\bigg)_{16}\bigg(_{17}\Big(_{18}\Big)_{19}\Big(_1(_2)_3\Big)_4\bigg)_5\bigg(_6\Big(_7(_8)_9\Big)_{10}\Big(_{11}\Big)_{12}\bigg)_{13}\bigg(_{14}.$$
\end{example}

\begin{definition}
Given $S=(s_1,\ldots,s_{2t+1})$, $t$ of which are ')' and $t+1$ are '('.
It is said \emph{canonical} if its first $2t$ parentheses are paired up when parenthesized.
\end{definition}

\newcommand{\ind}{\mathsf{index}}
\newcommand{\dep}{\mathsf{depth}}
\begin{definition}[Indices of the $2t+1$ parentheses]\label{def:index} For any canonical parentheses sequence $S$,
we index the $t+1$ left parentheses in $S$ by $0,\ldots,t$ according to the following rule:
\textbf{The smaller the depth, the less the index; and index from right to left for those under the same depth.}
Here, depth is defined in the standard way; it is the number of pairs of matched parentheses that cover the fixed parenthesis.
Moreover, \textbf{we index the $t$ right parentheses in such a way that any two paired parentheses have the same index.}
\end{definition}

For $S^{(14)}$ above, the depth and index are shown below (index on the right).

\begin{scriptsize}
$$\underbrace{\bigg(                              \bigg)}_0
    \underbrace{\bigg( \underbrace{\Big(\Big)}_1   \underbrace{\Big(\underbrace{()}_2\Big)}_1       \bigg)}_0
    \underbrace{\bigg( \underbrace{\Big(\underbrace{()}_2\Big)}_1 \underbrace{\Big(\Big)}_1      \bigg)}_0
    \underbrace{\bigg( }_0;\quad
~~ \underbrace{\bigg(                              \bigg)}_3
    \underbrace{\bigg( \underbrace{\Big(\Big)}_7   \underbrace{\Big(\underbrace{()}_9\Big)}_6       \bigg)}_2
    \underbrace{\bigg( \underbrace{\Big(\underbrace{()}_8\Big)}_5 \underbrace{\Big(\Big)}_4      \bigg)}_1
    \underbrace{\bigg( }_0.$$
\end{scriptsize}

This definition of indices is crucial to the next lemma and the entire section.

For convenience, denote by $\dep(s_i),\ind(s_i)$ the depth and index of $s_i$.

\begin{lemma}\label{lemma:index-stronger}
When $S$ is canonical, for any $s_l=($ and $s_r=)$,
  there are more $)$'s than $($'s in the cyclic interval $\{s_{l+1},\ldots,s_{r}\}$ if and only if $\ind(s_l)\geq \ind(s_r)$.
\end{lemma}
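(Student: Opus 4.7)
\smallskip\noindent\emph{Proof plan.}
The plan is to express both sides of the claimed equivalence in terms of $\dep(s_l)$ and $\dep(s_r)$ via the standard height function on parentheses, and then reconcile them through a short structural observation about equal-depth left parens.

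First, let $h(i)$ denote the running excess of opening parens over closing parens among $s_1,\ldots,s_i$, with $h(0)=0$. Canonicity yields $h(2t)=0$, $h(2t+1)=1$, and $h(i)\ge 0$ for all $i\le 2t$. Writing $\ell,r$ for the positions of $s_l,s_r$, a direct count shows that $\delta$, the excess of closing parens over opening parens in the cyclic interval $\{s_{\ell+1},\ldots,s_r\}$, equals $h(\ell)-h(r)$ in the non-wrap case $\ell<r$ and $h(\ell)-h(r)-1$ in the wrap case $\ell>r$. Because $s_l$ is a left paren we have $h(\ell)=\dep(s_l)+1$, and because $s_r$ is a right paren we have $h(r)=\dep(s_r)$. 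Hence the condition $\delta\ge 1$ translates to $\dep(s_l)\ge \dep(s_r)$ when $\ell<r$, and to $\dep(s_l)>\dep(s_r)$ when $\ell>r$.

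Next, I will translate the index side. Let $s_r'$ denote the left match of $s_r$, at some position $r'<r$; then $\ind(s_r)=\ind(s_r')$ and $\dep(s_r')=\dep(s_r)$. By Definition~\ref{def:index}, $\ind(s_l)\ge \ind(s_r')$ holds iff either $\dep(s_l)>\dep(s_r')$, or $\dep(s_l)=\dep(s_r')$ together with $\ell\le r'$. When $\dep(s_l)\ne \dep(s_r)$, strict depth comparison makes both sides of the iff agree immediately, irrespective of wrap; so the only delicate situation is equal depth.

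For the equal-depth case I will invoke the key structural fact that two distinct left parens at the same depth cannot lie inside one another's matched pair, since nesting strictly increases depth. Consequently, if $\dep(s_l)=\dep(s_r')$ and $s_l\ne s_r'$, the position $\ell$ cannot lie in the open interval $(r',r)$. The surviving alternatives are $\ell<r'$ (forcing $\ell<r$, non-wrap, hence $\delta=1$ and $\ind(s_l)>\ind(s_r')$) and $\ell>r$ (wrap, hence $\delta=0$ and $\ind(s_l)<\ind(s_r')$); in both, the two sides of the iff agree. The matched sub-case $s_l=s_r'$ trivially gives $\ell=r'<r$, $\delta=1$, and equal indices. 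The main obstacle is precisely this equal-depth analysis: it is where the right-to-left tiebreak in Definition~\ref{def:index} does essential work, and the rest is clean bookkeeping with $h$.
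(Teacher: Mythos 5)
Your proof is correct and follows the paper's route---height/balance function, case split on whether the interval wraps, reduction to a depth comparison---but you also carefully justify the final depth-to-index step that the paper asserts without argument (namely $\dep(s_r)\leq\dep(s_l)\Leftrightarrow\ind(s_r)\leq\ind(s_l)$ in the non-wrap case, with the strict version when wrapping). Your observation that two same-depth left parentheses cannot nest, combined with the right-to-left tiebreak of Definition~\ref{def:index}, is exactly the missing ingredient there and is a worthwhile addition to the paper's terse proof.
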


\subsection{Finding resolved functions $\{g_A\}$ of $\flex$}\label{subsect:g_A-flex}

\textbf{\underline{Parenthesis representation}.}
We can represent any $A\subseteq[v]$ by a sequence of parentheses $S=(s_1,\ldots,s_v)$ where
$s_x=')'$ if $x\in A$ and $s_x='('$ if $x\notin A$.
For example, $A=\{3,4,5,9,10,12,13,16,19\}$ is represented by the $S$ given in Example~\ref{example:S} above.
Notice that if $A\in \calP_t$, its associate sequence $S$ contains $t$ ')'s.

\begin{definition}\label{def:g_A-flex}
Fix $A\in \calP_t$ and let $S$ denote its parentheses sequence.
We abuse $\ind(s_x)$ to mean the index of $s_x$ in the unique canonical cyclic-shift of $S$ (uniqueness is by Lemma~\ref{lemma:cyclic-unique}).
For any $x\in A^C$ (hence $s_x='('$), define $g_A(x):= \ind (s_x) \mod (t+1) (\in [t+1])$ (restrict to $[t+1]$ by mapping $0$ to $t+1$).
\end{definition}

Because left parentheses have distinct indices, $g_A$ is a bijection as required.

\begin{theorem}
Let $\gamma$ be the natural bijection from all the $1$-element subsets of $[t+1]$ to $[t+1]$, which maps $\{x\}$ to $x$.
Define $\{g_A \mid A\in \calP_t\}$ as in Definition~\ref{def:g_A-flex}. Then,  $\flex=f_{\gamma,g}$. In other words, $\flex(A,A\cup \{x\})\equiv g_A(x)~(x\in A^C)$.
\end{theorem}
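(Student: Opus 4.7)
The plan is to translate the combinatorial definition of $\flex$ into the parenthesis language of Definition~\ref{def:g_A-flex} and then apply Lemma~\ref{lemma:index-stronger} to read off the count of positive $\fullmoon$'s directly as an index. Fix $A \in \calP_t$ with parenthesis sequence $S$, and fix $x \in A^C$, so that $s_x = ($. The edge $(A, A \cup \{x\})$ corresponds to the permutation $\rho$ in which position $x$ holds the $\times$, the $\fullmoon$'s sit at the positions of $)$'s in $S$, and the $\triangle$'s sit at the positions of the remaining $($'s. Since the CW cycle of positions is unchanged by any cyclic shift, $\flex(\rho)$ does not depend on which shift of $S$ I work with; I would therefore pass to the unique canonical cyclic-shift granted by Lemma~\ref{lemma:cyclic-unique} and use the indices of Definition~\ref{def:index} throughout.

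Next I would analyze a single $\fullmoon$ at a position $j \in A$ (so $s_j = )$). The CW interval $\{s_{x+1}, \ldots, s_j\}$ starts strictly after position $x$ and ends at $s_j$, and hence does not contain the $\times$ at position $x$. Within this interval the $)$'s are therefore in bijection with the $\fullmoon$'s and the $($'s are in bijection with the $\triangle$'s. So $\fullmoon_j$ is positive in $\rho$ iff there are strictly more $)$'s than $($'s in $\{s_{x+1}, \ldots, s_j\}$, which by Lemma~\ref{lemma:index-stronger} (taking $s_l = s_x$ and $s_r = s_j$) is equivalent to $\ind(s_x) \ge \ind(s_j)$.

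To finish, I would argue that in the canonical $S$ the $t+1$ left parentheses carry the indices $0, 1, \ldots, t$, while the unique unpaired $($ (at position $2t+1$ of the canonical form) necessarily has depth $0$ and is the rightmost depth-$0$ left parenthesis, so by Definition~\ref{def:index} it receives index $0$. Consequently the $t$ right parentheses carry exactly the indices $1, 2, \ldots, t$. Writing $k := \ind(s_x) \in \{0, \ldots, t\}$, the number of $)$'s with index $\le k$ is therefore exactly $k$, and by the previous paragraph this equals the number of positive $\fullmoon$'s in $\rho$. Reducing modulo $t+1$ under the shared convention that $0$ is mapped to $t+1$ yields $\flex(\rho) = k \bmod (t+1) = g_A(x)$, which is the claim.

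The only delicate spot I anticipate is the boundary case $\ind(s_x) = 0$, i.e.\ when $x$ corresponds to the unpaired $($: there $\rho$ has no positive $\fullmoon$'s at all, so both $\flex(\rho)$ and $g_A(x)$ must collapse to $t+1$ via the $0 \mapsto t+1$ convention baked into both definitions. Once that bookkeeping is lined up, the argument is uniform in $k$, and Lemma~\ref{lemma:index-stronger} carries essentially all of the combinatorial content.
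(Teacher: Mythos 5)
Your argument is correct and is essentially the paper's proof: pass to the parenthesis sequence $S$, apply Lemma~\ref{lemma:index-stronger} to characterize positive $\fullmoon$'s as right parentheses with index at most $\ind(s_x)$, and count to conclude $\flex(\rho) = \ind(s_x) \bmod (t+1) = g_A(x)$. You additionally spell out the counting step --- that the unpaired left parenthesis receives index $0$ and the right parentheses carry exactly the indices $1,\ldots,t$, so the number of positive $\fullmoon$'s equals $\ind(s_x)$ --- which the paper asserts without making explicit.
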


\begin{proof}
Build the parentheses sequence $S=(s_1,\ldots,s_{2t+1})$ of $A$ and the permutation $\rho=(\rho_1,\ldots,\rho_{2t+1})$ of $[t \fullmoon, t \triangle, 1 \times]$  corresponding to edge $(A,A\cup \{x\})$.
Recall that $\flex(A,A\cup \{x\}):=p \mod (t+1) \in [t+1]$, where $p$ is the size of
$P=\{\rho_r = \fullmoon \mid \hbox{there are more $\fullmoon$s than $\triangle$s in the cyclic interval } (\rho_{x+1},\ldots,\rho_r)\}$.
Observe that $S$ can be constructed from $\rho$ by replacing $\fullmoon,\triangle,\times$ to ')','(','('.
So, $\{s_r = ')' \mid \hbox{there are more $')'$s than $'('$s in the cyclic interval } (s_{x+1},\ldots,s_r)\}$,
  which equals $\{s_r = ')' \mid \ind(s_x)\geq \ind(s_r) \}$ by Lemma~\ref{lemma:index-stronger} (indices refer to those in the canonical cyclic-shift of $S$),
  has the same size as $P$, so $\ind(s_x)=p$.
Further by Definition~\ref{def:g_A-flex}, $g_A(x)=\ind(s_x) \mod (t+1) (\in [t+1])=\flex(A,A\cup \{x\})$.
\end{proof}

\begin{theorem}\label{thm:flex-a-b}
$\flex$ satisfies conditions (a) and (b).
\end{theorem}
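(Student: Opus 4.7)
The plan is to derive (a) immediately from the identity $\flex(A,A\cup\{x\})\equiv g_A(x)$ just established, and to obtain (b) by a symmetric analysis with the roles of $A$ and $A'$ swapped. For (a) I would note that two permutations agreeing on the positions of the $\fullmoon$'s fix the same $A\in\calP_t$ and differ only in which $x\in A^C$ carries the $\times$; the previous theorem labels them by $g_A(x)$ and $g_A(y)$, and since $g_A$ is a bijection these values differ.

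For (b) I would first introduce a dual parenthesis sequence $U=(u_1,\ldots,u_{2t+1})$ of $A'$ by declaring $u_y$ a left parenthesis iff $y\in A'$, so $U$ has $t+1$ left and $t$ right parentheses and Lemma~\ref{lemma:cyclic-unique} gives a unique canonical cyclic-shift with a well-defined index $\ind(u_y)$ at every position. Using the note in subsection~\ref{subsect:prel} that the number of nonnegative $\triangle$'s equals the number of positive $\fullmoon$'s, I would rewrite $\flex(A'-\{x\},A')$ modulo $t+1$ as the number of positions $r$ where $u_r$ is a right parenthesis and the cyclic interval $(u_{x+1},\ldots,u_r)$ contains at least as many left as right parentheses. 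Applying Lemma~\ref{lemma:index-stronger} with its $l$ taken as $x$ and then negating the inequality converts that inner condition to $\ind(u_r)>\ind(u_x)$, so
\[\flex(A'-\{x\},A')\equiv\#\{r:u_r\text{ is a right parenthesis},\ \ind(u_r)>\ind(u_x)\}\pmod{t+1}.\]

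A short observation then finishes (b): in every canonical cyclic-shift the unique unpaired left parenthesis is the last one, lies at depth $0$, and is the rightmost among depth-$0$ left parentheses, so by the indexing rule it receives index $0$; hence the $t$ right-parenthesis indices fill exactly $\{1,\ldots,t\}$, and the count on the right of the display above simplifies to $t-\ind(u_x)$. Defining $h_{A'}(x):=(t-\ind(u_x))\bmod(t+1)$ (restricted to $[t+1]$), this exhibits $\flex(A'-\{x\},A')$ as the value $h_{A'}(x)$ of a bijection $h_{A'}:A'\to[t+1]$, which is exactly condition (b).

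The step I anticipate as the main obstacle is the translation underlying (b): Lemma~\ref{lemma:index-stronger} is stated with right parentheses \emph{strictly} outnumbering left ones, whereas the ``nonnegative'' condition asks that $\fullmoon$'s \emph{weakly} outnumber $\triangle$'s, so the inequality has to be flipped via negation; I also need to track that the $\times$ position $x$ carries a left parenthesis in $U$ but is not itself in the interval $(u_{x+1},\ldots,u_r)$, so it enters the lemma only as the left endpoint. Once those inequalities and endpoints are aligned, and the unpaired left parenthesis is identified as the one of index $0$, the rest of the bookkeeping is routine.
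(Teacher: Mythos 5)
Your proof is correct, but it takes a genuinely different route from the paper's. For~(b), the paper argues by duality: it defines $\rho^*$ by swapping the $\triangle$'s and $\fullmoon$'s in $\rho$, establishes the symmetry $\flex(\rho^*)\bmod(t{+}1)+\flex(\rho)\bmod(t{+}1)=t$, and then deduces~(b) from~(a) applied to the family of duals, which is a two-line reduction once the symmetry is drawn in a picture. You instead redo the index analysis of the preceding theorem on the $A'$ side: you build the parenthesis sequence of $A'$ (which has $t{+}1$ left and $t$ right parentheses), apply Lemma~\ref{lemma:index-stronger} in contrapositive, observe that the unique unpaired left parenthesis sits at depth~$0$ rightmost and hence receives index~$0$, so the right-parenthesis indices are exactly $\{1,\ldots,t\}$, and conclude $\flex(A'{-}\{x\},A')\equiv t-\ind(u_x)\pmod{t+1}$. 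This displays an explicit bijection $h_{A'}\colon A'\to[t{+}1]$ with $\flex(A'{-}\{x\},A')=h_{A'}(x)$, from which~(b) is immediate. Your handling of the subtleties you flag --- the strict-vs-weak inequality flip, the fact that $u_x$ is a left parenthesis that sits at the endpoint rather than inside the cyclic interval, and the identification of the index-$0$ parenthesis --- is all correct. The trade-off: the paper's duality argument is shorter and cleaner, while your argument is more constructive and has the side benefit of producing the $A'$-side resolved functions explicitly (in the spirit of Theorem~\ref{thm:alg}.2, which reduces P2 to P1 precisely because such an explicit $h_{A'}$ exists), at the cost of repeating, essentially verbatim, the parenthesis bookkeeping already done once for $g_A$.
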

\begin{figure}[b]
  \centering \includegraphics[width=.38\textwidth]{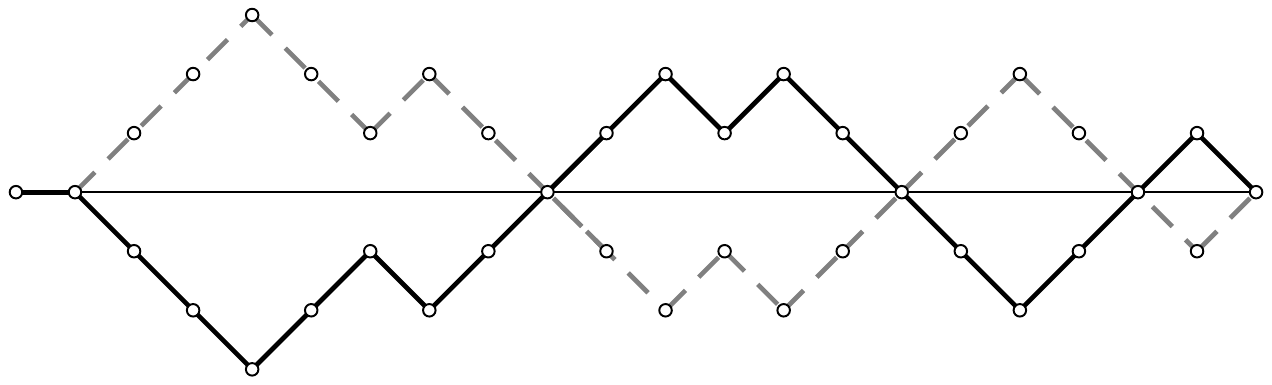}\\
  \caption{$\flex(\rho^*) \mod (t+1)+\flex(\rho)\mod (t+1)=t$. The dashed line indicates $\rho^*$.}\label{fig:positivedual}
\end{figure}
\begin{proof}
Because $\flex$ equals $f_{\gamma,g}$, applying Note~2 below Definition~\ref{def:f-gamma}, this labeling function satisfies condition(a).
Below we prove that it also satisfies condition (b).

\smallskip Define the \emph{dual} of $\rho$, denoted by $\rho^*$, to be another permutation of $[t \fullmoon, t \triangle, 1 \times]$ which is constructed from $\rho$ by swapping the $\triangle$'s with $\fullmoon$'s.
As illustrated in Fig.~\ref{fig:positivedual}, we have (i): $\flex(\rho^*)\mod (t+1)+\flex(\rho)\mod (t+1)=t$ for any $\rho$ .

Consider $t+1$ distinct permutations $\rho^0,\ldots,\rho^t$ sharing the same positions of $\triangle$'s.
Then, $(\rho^0)^*,\ldots,(\rho^t)^*$ share the same positions of $\fullmoon$'s.
Using condition (a), $\flex((\rho^0)^*),\ldots,\flex((\rho^t)^*)$ are distinct.
So $t-\flex((\rho^0)^*)\mod (t+1),\ldots,t-\flex((\rho^t)^*)\mod(t+1)$ are distinct.
 So $\flex(\rho^0)\mod (t+1),\ldots,\flex(\rho^t)\mod(t+1)$ are distinct by (i),
 i.e., $\flex(\rho^0),\ldots,\flex(\rho^t)$ are distinct. Thus (b) holds.\qed
\end{proof}

\begin{remark}
In the original proof of Theorem~\ref{thm:flex-a-b} in \cite{LEXCIAL88}, it proves the existence of bijections $g_A$'s $(A\in \calP_t)$ such that $\flex=f_{\gamma,g}$, yet how to define such $g_A$'s is neither explicitly given, nor implicitly given.
As we have seen in Definition~\ref{def:index}, giving this definition is not easy, even though the definition of $\flex$ is known.

There are two advantages of having explicit $\{g_A\}$.
First, the ideas we used in defining $g_A$ could be useful in finding resolvable 1-factorizations for the case $v>2t+1$.
Second, to solve P1 and P2 (in the next subsection),
  it seems necessary to have an explicit definition of $\{g_A\}$ for the efficiency of computation.
\end{remark}

\subsection{Linear Time Algorithms for P1 and P2}\label{subsect:alg}

Problem P1 admits a trivial $O(t^2)$ time solution as follows.
Given the positions of $\fullmoon$'s in $\rho$ and the number $i$, we can enumerate the position of the unique $\times$ among the remaining $t+1$ positions and compute $\flex(\rho)$ in $O(t)$ time, until that
  the computed value is $i$. Problem P2 can be solved symmetrically.

\smallskip Applying the results in subsection~\ref{subsect:g_A-flex}, we can solve P1 much more efficiently.
  Briefly, using those indices of parentheses in Definition~\ref{def:index}, we can compute $\flex()$ for all permutations $\rho^0,\ldots,\rho^t$ in which the positions of $\fullmoon$'s are as given altogether, and then find $\rho^j$ so that $\flex(\rho^j)=i$. See the details in Algorithm~\ref{alg:flexP1}.

\begin{algorithm}[h]
\caption{Computing the unique $A'$ such that $(A,A')\in \calL_i$.}\label{alg:flexP1}
\KwIn{A set $A\in \calP_t$ and a number $i\in [t+1]$.}
\KwOut{The set $A'=A\cup \{z\}$ so that $(A,A')\in \calL_i$.\\ \qquad (Integer $z$ indicates the position of $\times$ so that $\flex(\rho)=i$.)}
Compute the parentheses sequence $S$ of $A$.\\
Compute the unique $j$ so that the first $2t$ parentheses are paired up in $S^{(j)}$.\\
Compute the indices of all parentheses in $S'=S^{(j)}$ according to Definition~\ref{def:index}.\\
Find $s'_{z-j}='('$ in $S$ with index $(i\mod (t+1))$ and output $A'=A\cup \{z\}$.
\end{algorithm}

\begin{theorem}\label{thm:alg}
\begin{enumerate}
\item Given a canonical $S'$, we can compute the indices of all parentheses in $S'$ in $O(t)$ time. Therefore, Algorithm~\ref{alg:flexP1} solves P1 in $O(t)$ time.
\item An instance $(A',i)$ of P2 reduces to the instance $([v]-A',j)$ of P1, where $i\mod (t+1)+j\mod(t+1)=t$. Thus P2 can be solved in $O(t)$ time.
\end{enumerate}
\end{theorem}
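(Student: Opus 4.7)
For Part 1, the plan is a three-pass $O(t)$ algorithm over the canonical sequence $S'$. The first pass scans $S'$ left to right while maintaining a depth counter and a stack of positions of unmatched '(': this yields $\dep(s_i)$ for every position and, upon each ')', pairs it with the popped '('; each symbol is pushed and popped at most once, so this is $O(t)$. The second pass tallies $n_d$, the number of left parentheses at each depth $d$, and then forms the prefix sums $N_d = n_0 + n_1 + \cdots + n_{d-1}$ in $O(t)$ time, since the depth is bounded by $t$. By Definition~\ref{def:index}, the $n_d$ left parentheses at depth $d$ must receive the indices $N_d, N_d+1, \ldots, N_d + n_d - 1$ from right to left. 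The third pass therefore scans $S'$ right to left and, upon encountering a '(' at depth $d$, hands it the current value of a per-depth counter initialized to $N_d$ and then increments the counter. Each ')' finally inherits its index from the matched '(' recorded in pass one. Combined with the $O(t)$-time construction of the parentheses sequence $S$ from $A$, with Lemma~\ref{lemma:cyclic-unique} to locate the canonical shift, and with a single look-up for the '(' of index $(i \mod (t+1))$, Algorithm~\ref{alg:flexP1} runs in $O(t)$.

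For Part 2, the plan is to exploit the duality map $\rho \mapsto \rho^*$ from the proof of Theorem~\ref{thm:flex-a-b}. Given a P2 instance $(A', i)$, let $\rho$ denote the (unknown) permutation with $\flex(\rho)=i$ whose $\triangle$-positions are $[v] - A'$, $\fullmoon$-positions are the target $A$, and $\times$-position is $A' - A$. Its dual $\rho^*$, obtained by swapping $\fullmoon$'s and $\triangle$'s while fixing the $\times$, has $\fullmoon$-positions $[v] - A'$ and $\times$-position $A' - A$, so it represents the edge $([v] - A', [v] - A)$ of $H(v, t)$; note $\lvert [v] - A' \rvert = t$, so $[v] - A' \in \calP_t$ and this is a valid P1 input. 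By identity (i) in the proof of Theorem~\ref{thm:flex-a-b}, $\flex(\rho^*) \mod (t+1) = t - \flex(\rho) \mod (t+1)$, hence picking $j$ with $i \mod (t+1) + j \mod (t+1) = t$ yields $\flex(\rho^*) = j$. Therefore running the algorithm of Part 1 on the P1 instance $([v] - A', j)$ returns $[v] - A$, from which $A$ is recovered by complementation in $O(v) = O(t)$ time.

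The main obstacle, as I see it, lies in verifying that the right-to-left bucketed assignment of pass three reproduces exactly the indexing of Definition~\ref{def:index}. This reduces to two observations that I would spell out carefully: first, because smaller-depth indices must precede larger ones, the per-depth starting points must be the cumulative counts $N_d$; second, because same-depth '('s are indexed from right to left, a single right-to-left scan suffices to enumerate them in the correct order. Both observations fit the bucket-counter scheme by construction, so the verification is routine once the invariants are written out.
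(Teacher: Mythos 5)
Your proof is correct and fills in exactly the argument that the paper leaves out (the paper simply states "The proof of Theorem~\ref{thm:alg} is trivial and is omitted due to space limits," so there is no author-provided proof to compare against beyond that assertion). Both halves of your argument check out.

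For Part 1, the three-pass bucket scheme matches the indexing rule of Definition~\ref{def:index} exactly: smaller depths come first, so the start offset for depth $d$ must be $N_d = n_0 + \cdots + n_{d-1}$, and a right-to-left sweep hands out $N_d, N_d+1, \ldots$ to same-depth left parentheses in the required right-to-left order. Verifying this against the worked example in the paper ($S^{(14)}$, where depth-0 parens get $0,1,2,3$ from right to left, depth-1 parens get $4,5,6,7$ from right to left, depth-2 parens get $8,9$) confirms the invariant. Two details you should say out loud for completeness: (i) the third pass should also record, for each index $k \in \{0,\ldots,t\}$, the position of the left parenthesis receiving that index, so that the final look-up in Algorithm~\ref{alg:flexP1} is a single array access; and (ii) the value being looked up is $i \bmod (t+1)$ with remainder in $\{0,\ldots,t\}$, which is consistent with $g_A(x) = \ind(s_x) \bmod (t+1)$ mapping $0 \mapsto t+1$ in Definition~\ref{def:g_A-flex} (so $i = t+1$ correctly corresponds to index $0$).

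For Part 2, the reduction via the dual $\rho \mapsto \rho^*$ and identity~(i) from the proof of Theorem~\ref{thm:flex-a-b} is exactly the right tool, and your reasoning that $\rho^*$ represents the edge $([v]-A', [v]-A)$ with $\flex(\rho^*) = j$ is correct. One small point worth making explicit: the map $i \mapsto j$ determined by $i \bmod (t+1) + j \bmod (t+1) = t$ is a bijection on $[t+1]$ (it sends $t+1 \mapsto t$, $t \mapsto t+1$, and $i \mapsto t-i$ otherwise), so $j$ is uniquely determined and the P1 instance is well-posed. With that, uniqueness of 1-factors forces the P1 output on $([v]-A', j)$ to equal $[v]-A$, and complementation finishes in $O(t)$. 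Overall, no gaps; this is a faithful and complete expansion of the claim.
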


The proof of Theorem~\ref{thm:alg} is trivial and is omitted due to space limits.

\subsection{Variation laws of $\flex$}

\newcommand{\rhoRT}{\rho^{\times\rightarrowtail\triangle}}
\newcommand{\rhoRC}{\rho^{\times\rightarrowtail\fullmoon}}
\newcommand{\rhoLT}{\rho^{\triangle\leftarrowtail\times}}
\newcommand{\rhoLC}{\rho^{\fullmoon\leftarrowtail\times}}

We prove some \emph{variations laws} of $\flex$ as summarized in Lemma~\ref{lemma:flex-law},
which are comparable to the laws of modular factorization given below in Lemma~\ref{lemma:modular-law}.
See \ref{subsect:flex-law} for the details, including the definitions of
  $\rhoRT$, $\rhoRC$, $\rhoLC$, and $\rhoLT$.

\begin{lemma}[Variation laws of $\flex$]\label{lemma:flex-law}Restrict the remainder to $[t+1]$ here.\smallskip

\noindent When $\flex(\rho) \neq t+1$, $\flex(\rhoRT)=\flex(\rhoLC)=(\flex(\rho)-1) \mod (t+1)$.\smallskip

\noindent When $\flex(\rho) \neq t$, $\flex(\rhoRC)=\flex(\rhoLT)=(\flex(\rho)+1) \mod (t+1)$.
%See Fig.~\ref{fig:flex} for an illustration of the equation $\flex(\rhoRT)=\flex(\rho)-1$.
%\\ See Fig.~\ref{fig:flex2} for an illustration of the equation $\flex(\rhoRC)=\flex(\rho)+1$.
\end{lemma}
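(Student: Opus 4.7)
The plan is to use the formula $\flex(\rho) \equiv \ind(s_x) \pmod{t+1}$ proved in Subsection~\ref{subsect:g_A-flex}, where $s_x$ is the left parenthesis at the $\times$-position in the canonical cyclic-shift $S^{(j)}$ of the parenthesis sequence of $A$. Each of the four claimed equalities then becomes a statement about how the canonical index at the $\times$-position shifts under the corresponding operation, and the four operations split naturally into two cases.

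First, the ``$T$-type'' operations $\rhoRT$ and $\rhoLT$ swap $\times$ with a $\triangle$, so the $\fullmoon$-set $A$---and hence the parenthesis sequence $S$, its canonical cyclic-shift, and every index assigned by Definition~\ref{def:index}---is unchanged; only the position of $\times$ moves. The claim therefore reduces to verifying that the new position of $\times$ coincides with the unique $($ whose canonical index equals $(\ind(s_x)-1)\bmod(t+1)$ for $\rhoRT$ and $(\ind(s_x)+1)\bmod(t+1)$ for $\rhoLT$. The exclusions $\flex(\rho)\ne t+1$ and $\flex(\rho)\ne t$ are respectively $\ind(s_x)\ne 0$ and $\ind(s_x)\ne t$, which are the boundary cases for the operations as defined in~\ref{subsect:flex-law}. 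I would carry this out by a direct combinatorial analysis of the depth-wise, right-to-left indexing, invoking Lemma~\ref{lemma:index-stronger} to translate index comparisons into parenthesis-balance statements on cyclic intervals.

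For the ``$C$-type'' operations $\rhoRC$ and $\rhoLC$, $\times$ is swapped with a $\fullmoon$, so $A$ changes and the canonical shift must be rebuilt---making a direct index-tracking argument intricate. Instead I plan to reduce to the $T$-type case via the duality from the proof of Theorem~\ref{thm:flex-a-b}: the dual $\rho^*$, obtained by swapping $\fullmoon$'s and $\triangle$'s, satisfies $\flex(\rho^*)+\flex(\rho)\equiv t\pmod{t+1}$, and because dualization converts a swap-with-$\fullmoon$ into a swap-with-$\triangle$ at the same positions, $(\rhoRC)^* = (\rho^*)^{\times\rightarrowtail\triangle}$ and $(\rhoLC)^* = (\rho^*)^{\triangle\leftarrowtail\times}$. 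Applying the $T$-type case to $\rho^*$ and combining with the duality identity gives
\[
\flex(\rhoRC) \equiv t - \flex\bigl((\rho^*)^{\times\rightarrowtail\triangle}\bigr) \equiv t - (\flex(\rho^*)-1) \equiv \flex(\rho)+1 \pmod{t+1},
\]
and symmetrically $\flex(\rhoLC) \equiv \flex(\rho)-1 \pmod{t+1}$; the exclusion conditions translate correctly via $\flex(\rho^*)\equiv t-\flex(\rho)\pmod{t+1}$.

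The hardest part will be the $T$-type case. Although $A$ is frozen, an index-decrement of $1$ in the canonical cyclic-shift is generically \emph{not} realized by an adjacent swap in $\rho$, because Definition~\ref{def:index} orders the $($'s depth-first and only then right-to-left within each depth. The bulk of the argument is therefore a careful case analysis---with Lemma~\ref{lemma:index-stronger} as the main tool---showing that the specific operation prescribed in~\ref{subsect:flex-law} always lands on the $($ whose canonical index differs from $\ind(s_x)$ by exactly $\pm 1$, even when the swap crosses depth boundaries or entire matched substructures of the parenthesis forest.
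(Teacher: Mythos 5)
Your approach is correct but is genuinely different from the paper's. The paper proves only one of the four equalities, $\flex(\rhoRT)=\flex(\rho)-1$, directly from the \emph{count of positive $\fullmoon$'s}: normalizing $\rho_1=\times$ and tracking the height function, it observes that the swap with the CW-first CW-balanced $\triangle$ decreases the number of positive $\fullmoon$'s by exactly one (the unique $\fullmoon$ at height $1$ loses positivity while all others keep theirs), and then invokes symmetry for the remaining three equalities. You instead route everything through the canonical-index identity $\flex(\rho)\equiv\ind(s_x)\pmod{t+1}$ from Subsection~\ref{subsect:g_A-flex}, handle the $T$-type operations by showing the new $\times$-position is the left parenthesis whose canonical index differs by exactly one, and clear the $C$-type operations by a clean duality reduction $(\rhoRC)^*=(\rho^*)^{\times\rightarrowtail\triangle}$ together with $\flex(\rho^*)+\flex(\rho)\equiv t$. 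Both routes are sound, and your duality step is a tidy substitute for the paper's informal appeal to symmetry, but the $T$-type index-tracking you defer to a ``careful case analysis'' is the genuine crux and is harder than the paper's height argument: one must split on whether the balanced interval wraps around the canonical shift (in the non-wrap case $s_y$ is the next left parenthesis at the same depth to the right, in the wrap case $s_y$ is the leftmost left parenthesis at one smaller depth), and verify in each case that the index drops by exactly one while also checking the CW-first balanced $\triangle$ exists precisely when $\ind(s_x)\neq 0$. That case split is where the real content lies, and it is not carried out in your proposal; the paper's direct count of positive $\fullmoon$'s sidesteps it entirely. Your approach does buy something, though: it tightly couples the variation laws to the parenthesis-index structure that underlies the resolved functions $\{g_A\}$, which is thematically closer to Section~\ref{sect:lexical}'s algorithmic results.
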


%\begin{remark}
%Alternatively, the variation law of $\flex$ can be proved via the connection $\flex(\rho)=\ind(s_{l})$ stated in Corollary~\ref{corol:index}.1.
%We omit this alternative proof in the current version.
%Consider a sequence $S$ given as Definition~\ref{def:index}.
%If $s_l,s_{l'}$ are two left parentheses such that $\ind(s_{l'})=\ind(s_l)-1$, then we claim that
%  the number of right parentheses equals the number of left parentheses in the (cyclic) interval $s_{l+1},\ldots,s_{l'}$.
%\end{remark}

\section{Revisit the modular factorization}\label{sect:modular}

This section presents a new and simpler definition of the modular factorization.
When a number modulo $t+1$ in this section, the remainder is restricted to $[t+1]$.

\medskip \noindent \textbf{\underline{The modular factorization}\cite{MODULAR94}.\mbox{  }}
The modular factorization was originally given by $t+1$ 1-factors $\calM_1,\ldots,\calM_{t+1}$ where $\calM_i$ was defined as follows.
Consider $A\in \calP_t$. Let $\Sigma A$ indicate the sum of elements in $A$.
Let $y=(\Sigma A + i) \mod (t+1)(\in[t+1])$.
Then, $\calM_i(A):=A\cup \{z\}$, where $z$ is the $y$-th \textbf{largest} element in $[v]-A$.

Take $t=3,v=7$, and $A=\{2,4,6\}$ for example:

For $i=1$, we have $y=13=1\hspace{-2mm}\pmod 4$ and $z=7$. So $\calM_1(A)=\{2,4,6,7\}$.

For $i=2$, we have $y=14=2\hspace{-2mm}\pmod 4$ and $z=5$. So $\calM_2(A)=\{2,4,5,6\}$.

For $i=3$, we have $y=15=3\hspace{-2mm}\pmod 4$ and $z=3$. So $\calM_3(A)=\{2,3,4,6\}$.

For $i=4$, we have $y=16=4\hspace{-2mm}\pmod 4$ and $z=1$. So $\calM_4(A)=\{1,2,4,6\}$.

\smallskip \noindent\textbf{Note 1.} It is proved in \cite{MODULAR94} that $\calM_i$ is a 1-factor for each $i~(1\leq i\leq t+1)$. Moreover, it is obvious that all the 1-factors $\calM_1,\ldots,\calM_{t+1}$ are pairwise-disjoint.

\smallskip \noindent\textbf{Note 2.} The origins of modular factorization are murky, said by the authors of \cite{MODULAR94}, who credited it to Robinson, who asked if it is the same as the lexical one.

\smallskip \noindent\textbf{Note 3.} Assume $\calM_i(A)=A'$. We can compute $A$ from $i$ and $A'$ symmetrically.
    Let $x=(\Sigma A'+i) \mod (t+1) (\in [t+1])$ where $\Sigma A'$ indicates the sum of elements in $A'$. Then $A= A'-\{z\}$, where $z$ is the $x$-th \textbf{smallest} element in $A'$ \cite{MODULAR94}.
    Thus, the problems on modular factorizations analogous to P1 and P2 are easy to solve.

\medskip The original definition of the modular factorization above does not explicitly give its labeling function.
Such a labeling function will be needed in analyzing the variation laws of the above modular factorization in Lemma~\ref{lemma:modular-law} below and
  hence we state it Lemma~\ref{lemma:modular-lf}.
However, our definition of the modular factorization is \textbf{not} given by Lemma~\ref{lemma:modular-lf}.
The proof of Lemma~\ref{lemma:modular-lf} can be found in \ref{sect:modular-proof}.

\newcommand{\rankCCWT}{\mathsf{rank}^{\circlearrowleft}_{\triangle}}
\newcommand{\rankCCWC}{\mathsf{rank}^{\circlearrowleft}_{\fullmoon}}

\smallskip Consider any permutation $\rho=(\rho_1,\ldots,\rho_{2t+1})$ of $[t \fullmoon, t \triangle, 1 \times]$.
For each $i\in [2t+1]$, the \emph{position} of $\rho_i$ is $i$.
Let $O^\rho_1,\ldots,O^\rho_t$ be the positions of $t$ $\fullmoon$'s in $\rho$ and $T^\rho_1,\ldots,T^\rho_t$ the positions $t$ $\triangle$'s.
Denote by $\rankCCWT(\rho)$ the rank of $\times$ when enumerating all $\triangle$'s and $\times$ in $\rho$ from $\rho_{2t+1}$ back to $\rho_1$.
So, $\rankCCWT(\rho)-1$ is the number of $\triangle$'s with positions larger than the position of $\times$.
Denote by $\rankCCWC(\rho)$ the rank of $\times$ when enumerating all $\fullmoon$'s and $\times$ in $\rho$ from $\rho_{2t+1}$ back to $\rho_1$.

\begin{lemma}\label{lemma:modular-lf}
The labeling function of $\{\calM_1,\ldots,\calM_{t+1}\}$ is given by $\fm$, where
\begin{eqnarray*}
\fm(\rho)&:=&\rankCCWT(\rho)-\Sigma_{j=1}^t O^\rho_j \hspace{-1mm}\pmod{t+1} (\in [t+1]), \hbox{ or }\label{def:fm}\\
\fm(\rho)&:=&1 + \Sigma_{j=1}^t T^\rho_j - \rankCCWC(\rho) \hspace{-1mm}\pmod{t+1} (\in [t+1]). \label{def:fm-equiv}
\end{eqnarray*}
\end{lemma}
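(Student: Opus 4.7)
The plan is to translate between the set-encoding $(A,A')$ of an edge and the permutation-encoding $\rho$, and then read off each of the two formulas for $\fm(\rho)$ from one of the two equivalent descriptions of $\calM_i$ (the original one in the lemma statement and the symmetric version recorded in Note~3).

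First I would pin down the dictionary. If $(A,A')$ corresponds to $\rho$, then $A=\{O^\rho_1,\ldots,O^\rho_t\}$ (the positions of the $\fullmoon$'s), $[v]-A'=\{T^\rho_1,\ldots,T^\rho_t\}$ (the positions of the $\triangle$'s), and $z:=A'-A$ is the unique position of $\times$. In particular $\Sigma A=\Sigma_{j=1}^t O^\rho_j$; and since $\Sigma A' + \Sigma([v]-A') = 1+2+\cdots+(2t+1) = (t+1)(2t+1) \equiv 0 \pmod{t+1}$, we also have $\Sigma A' \equiv -\Sigma_{j=1}^t T^\rho_j \pmod{t+1}$.

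For the first identity I would apply the original definition directly. An edge $(A,A\cup\{z\})$ lies in $\calM_i$ iff $z$ is the $y$-th largest element of $[v]-A$, where $y \equiv \Sigma A + i \pmod{t+1}$. But $[v]-A$ is precisely the set of positions of the $\triangle$'s and the $\times$, so listing $[v]-A$ from largest to smallest coincides with the scan $\rho_{2t+1},\rho_{2t},\ldots,\rho_1$ restricted to $\triangle$'s and $\times$. Hence $y=\rankCCWT(\rho)$, and rearranging yields $i \equiv \rankCCWT(\rho) - \Sigma_{j=1}^t O^\rho_j \pmod{t+1}$, which is the first formula.

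For the second identity I would use Note~3: we have $A = A' - \{z\}$ where $z$ is the $x$-th smallest element of $A'$ and $x \equiv \Sigma A' + i \pmod{t+1}$. Since $A'$ consists of the positions of the $\fullmoon$'s together with the position of $\times$, enumerating $A'$ from smallest to largest reverses the scan that defines $\rankCCWC(\rho)$; an element of $A'$ that is the $k$-th from largest is the $(t+2-k)$-th from smallest, so $x = t+2 - \rankCCWC(\rho)$. Combining with $\Sigma A' \equiv -\Sigma_{j=1}^t T^\rho_j \pmod{t+1}$ gives $i \equiv (t+2-\rankCCWC(\rho)) + \Sigma_{j=1}^t T^\rho_j \equiv 1 + \Sigma_{j=1}^t T^\rho_j - \rankCCWC(\rho) \pmod{t+1}$, the second formula. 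The only genuine obstacle here is bookkeeping---restricting remainders to $[t+1]$ rather than $\{0,\ldots,t\}$, 1-indexing the ranks, and keeping ``largest'' and ``smallest'' straight between the two halves---but once those conventions are fixed, each formula is a one-line algebraic rearrangement of the corresponding description of $\calM_i$.
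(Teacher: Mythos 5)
Your proof is correct; both formulas follow from the calculations you give, and the bookkeeping (ranks are $1$-indexed, $|A'|=t+1$ so $k$-th largest $=$ $(t+2-k)$-th smallest, remainders restricted to $[t+1]$) all checks out. The route differs modestly from the paper's: the paper first proves the \emph{equivalence} of the two expressions for $\fm$ directly, by subtracting the two modular identities
$z+\Sigma_j O^\rho_j+\Sigma_j T^\rho_j\equiv 0$ and $z+(\rankCCWT(\rho)-1)+(\rankCCWC(\rho)-1)\equiv -1 \pmod{t+1}$ (where $z$ is the position of $\times$), and then verifies only the first formula against the original definition of $\calM_i$. You instead derive each expression independently from its natural source: the first from the original definition of $\calM_i$ and the second from the symmetric description in Note~3. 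Your approach is structurally cleaner in that each formula is matched to the definition it ``comes from''; the paper's approach is slightly more self-contained since it does not invoke Note~3 at all, and in passing it supplies an independent check of the equivalence of the two $\fm$ expressions. The underlying arithmetic (telescoping of positions, complementary ranks modulo $t+1$) is the same in both.
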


We now introduce a labeling function $\fmod$ and proves that $\fmod\equiv \fm+C$ for some constant $C$. Thus we give an alternative yet equivalent definition of the modular factorization, which is $\{F_{\fmod,1},\ldots,F_{\fmod,t+1}\}$.

\begin{definition}\label{def:fmod}
Assume $\rho=(\rho_1,\ldots,\rho_{2t+1})$ is any permutation of $[t \fullmoon, t \triangle, 1 \times]$.
Arrange $\rho_1,\ldots,\rho_{2t+1}$ in CW order.
We count \textbf{the number of tuples $(\times, \fullmoon, \triangle)$} which are located in CW order within this cycle of characters (positions may be inconsecutive) (such a tuple is an inversion when we cut the sequence at $\times$).
Taken modulo $(t+1)$, the remainder, restricted to $[t+1]$, is $\fmod(\rho)$. See Fig.~\ref{fig:fmod}.
\end{definition}

By Definition~\ref{def:fmod}, we establish an interesting connection between the modular factorization and
  the \textbf{inversion number of permutations} (section 5.3 of \cite{book-KT}).

\newcommand{\rhoCT}{\rho^{\times\rightarrow\triangle}}
\newcommand{\rhoCC}{\rho^{\times\rightarrow\fullmoon}}
\newcommand{\rhoDT}{\rho^{\triangle\leftarrow\times}}
\newcommand{\rhoDC}{\rho^{\fullmoon\leftarrow\times}}

\begin{figure}[t]
  \centering
  \includegraphics[width=.6\textwidth]{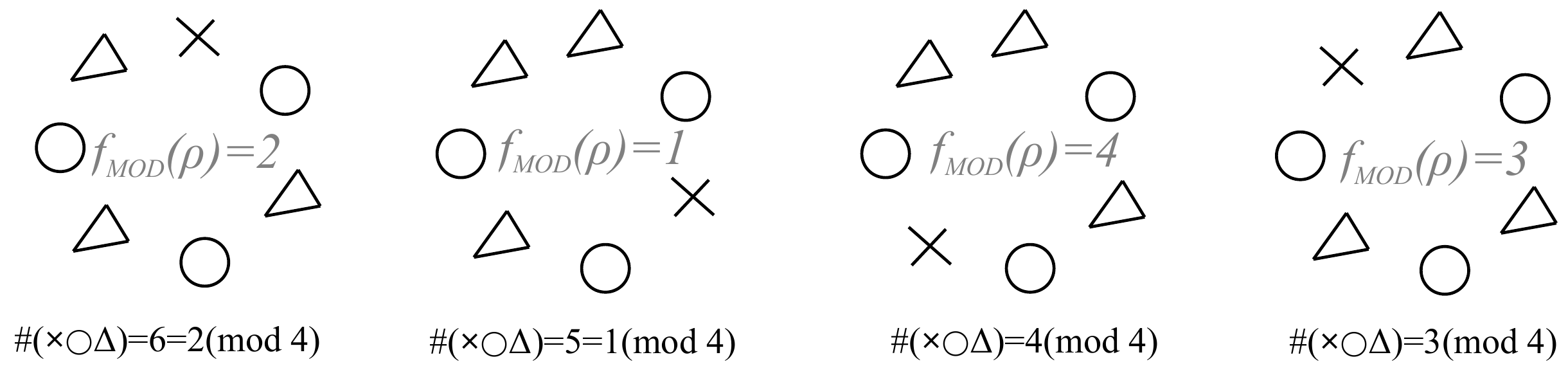}
  \caption{Illustration of the definition of $\fmod$. The four permutations drawn here share the same positions of $\fullmoon$'s, and they are mapped to different numbers under $\fmod$.} \label{fig:fmod}
\end{figure}

\medskip Let $\rhoCT$ be constructed from $\rho$, which swaps $\times$ with its CW next $\triangle$.

Let $\rhoCC$ be constructed from $\rho$, which swaps $\times$ with its CW next $\fullmoon$.

Let $\rhoDT$ be constructed from $\rho$, which swaps $\times$ with its CCW next $\triangle$.

Let $\rhoDC$ be constructed from $\rho$, which swaps $\times$ with its CCW next $\fullmoon$.

\begin{lemma}[Variation laws of $\fm$ and $\fmod$]\label{lemma:modular-law}
\begin{eqnarray}
\fmod(\rhoCT) = \fmod(\rhoDC) = \fmod(\rho)-1 \quad  (\hspace{-1mm}\mod{t+1}),\label{eqn:fmod-1}\\
\fmod(\rhoCC) = \fmod(\rhoDT) = \fmod(\rho)+1 \quad  (\hspace{-1mm}\mod{t+1}).\label{eqn:fmod+1}\\
\fm(\rhoCT) = \fm(\rhoDC) = \fm(\rho)-1,     \quad  (\hspace{-1mm}\mod{t+1}) \label{eqn:fm-1}\\
\fm(\rhoCC) = \fm(\rhoDT) = \fm(\rho)+1.    \quad  (\hspace{-1mm}\mod{t+1}) \label{eqn:fm+1}.
\end{eqnarray}
\end{lemma}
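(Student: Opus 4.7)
The plan is to prove equations (\ref{eqn:fmod-1}) and (\ref{eqn:fmod+1}) directly from Definition~\ref{def:fmod} by inversion counting on the cycle, and then derive (\ref{eqn:fm-1}) and (\ref{eqn:fm+1}) from the identity $\fmod \equiv \fm + C \pmod{t+1}$ announced in the paragraph preceding Definition~\ref{def:fmod}.

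For the prototypical case $\rhoCT$, the crucial observation is that $\fmod(\rho)$ is a cyclic invariant: since the tuples $(\times,\fullmoon,\triangle)$ are counted with respect to CW order on the cycle starting from $\times$, rotating $\rho$ leaves the count unchanged. I cut the cycle immediately after $\times$ to obtain a linear sequence $c_1,\ldots,c_{2t}$ of all non-$\times$ characters in CW order. By the very definition of ``CW-next $\triangle$,'' there is a unique $q\in\{1,\ldots,2t\}$ with $c_1=\cdots=c_{q-1}=\fullmoon$ and $c_q=\triangle$. After the swap, cutting just after the new $\times$ gives the sequence $c_{q+1},\ldots,c_{2t},\triangle,\fullmoon,\ldots,\fullmoon$, where the trailing block of $q-1$ $\fullmoon$'s is the former leading block. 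Letting $X$ denote the number of $(\fullmoon,\triangle)$ inversions inside the common segment $c_{q+1},\ldots,c_{2t}$, a partition of all inversions by block yields $\fmod(\rho)=(q-1)t+X$ and $\fmod(\rhoCT)=X+(t-q+1)$, whose difference $2t+1-q(t+1)$ is congruent to $-1$ modulo $t+1$.

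The case $\rhoCC$ is handled by a symmetric computation with the roles of $\triangle$ and $\fullmoon$ exchanged: the leading block now consists of $q'-1$ $\triangle$'s and a parallel inversion partition yields $\fmod(\rhoCC)-\fmod(\rho)=q'(t+1)-(2t+1)\equiv +1\pmod{t+1}$. The two remaining identities then come for free from the fact that $\rhoDC$ is the inverse operation of $\rhoCC$ (swapping $\times$ with its CCW-next $\fullmoon$ undoes swapping with the CW-next $\fullmoon$), and likewise $\rhoDT$ is inverse to $\rhoCT$: substituting $\rho\leftarrow\rhoDC(\rho)$ in $\fmod(\rhoCC(\rho))=\fmod(\rho)+1$ converts it into $\fmod(\rhoDC(\rho))=\fmod(\rho)-1$, and analogously for $\rhoDT$. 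Once equations (\ref{eqn:fmod-1}) and (\ref{eqn:fmod+1}) are in hand, (\ref{eqn:fm-1}) and (\ref{eqn:fm+1}) are immediate because $\fm$ and $\fmod$ differ by a constant modulo $t+1$.

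The main obstacle is the combinatorial bookkeeping in the two inversion tallies: each of $\rho$ and $\rhoCT$ decomposes into three distinct blocks (old middle, a single displaced character, and old leading block) that appear in different orders after the swap, and every pair of blocks can contribute $(\fullmoon,\triangle)$ inversions. Crucially, the cyclic-invariance trick eliminates any wrap-around case analysis: without it, the change in a naive position-based quantity such as $\rankCCWT$ equals $-1$ or $+t$ according to whether the swap crosses the boundary index $2t+1$, and these values only coincide modulo $t+1$. Working consistently in the $\times$-centered frame bypasses this artifact and reduces the whole proof to the single arithmetic identity $2t+1\equiv -1\pmod{t+1}$.
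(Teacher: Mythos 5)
Your derivation of equations~(\ref{eqn:fmod-1}) and (\ref{eqn:fmod+1}) is correct and takes a somewhat different route from the paper: you cut the cycle at $\times$, decompose the linear sequence into blocks around the swapped $\triangle$ (resp.\ $\fullmoon$), and tally $(\fullmoon,\triangle)$ inversions block by block, whereas the paper tracks the change in the per-$\fullmoon$ contribution to the tuple count ($(t-a)\cdot 1 - a\cdot t$ where $a$ is the number of $\fullmoon$'s between $\times$ and its CW next $\triangle$). The two tallies agree once one notices your $q-1$ is the paper's $a$. Your inverse-substitution argument for $\rhoDC$ and $\rhoDT$ is also valid and aligns with the device the paper uses to reduce the four equations to two.

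However, your derivation of (\ref{eqn:fm-1}) and (\ref{eqn:fm+1}) is circular. You invoke the identity $\fmod\equiv\fm+C\pmod{t+1}$, but in the paper that identity is \emph{not} established independently: it is Corollary~\ref{corol:fm-fmod-equiv}, and its proof reads ``Because $\fm$ and $\fmod$ have the same variation law, there is a constant $C$\ldots'', i.e.\ it is deduced \emph{from} Lemma~\ref{lemma:modular-law}. The sentence before Definition~\ref{def:fmod} merely announces what will be proved later; it does not license using the identity as an input to the lemma. You therefore cannot pass from the $\fmod$ variation laws to the $\fm$ ones this way. The fix is a short direct argument, which your own remark essentially contains: swapping $\times$ with its CW-next $\triangle$ leaves $\Sigma_{j}O^\rho_j$ unchanged (the $\fullmoon$ positions are untouched) and changes $\rankCCWT(\rho)$ by $-1$ or $+t$, both congruent to $-1$ modulo $t+1$, so by (\ref{def:fm}) $\fm(\rhoCT)=\fm(\rho)-1\pmod{t+1}$; analogously, swapping with the CCW-next $\fullmoon$ leaves $\Sigma_j T^\rho_j$ unchanged and changes $\rankCCWC(\rho)$ by $+1\pmod{t+1}$, giving $\fm(\rhoDC)=\fm(\rho)-1\pmod{t+1}$ via (\ref{def:fm-equiv}). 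Equations~(\ref{eqn:fm+1}) then follow by the same inverse substitution you already used.
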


Lemma~\ref{lemma:modular-law} is proved in \ref{sect:modular-proof}. Its corollary below is trivial; proof omitted.

\begin{corollary}\label{corol:fm-fmod-equiv}
Because $\fm$ and $\fmod$ have the same variation law, there is a constant $C$ so that $\fmod\equiv \fm+C$.
Specifically, $\left\{
  \begin{array}{ll}
    C=0, & ~\hbox{$t$ is even;} \\
    C=(t+1)/2, & ~\hbox{$t$ is odd.}
  \end{array}
\right.$
\end{corollary}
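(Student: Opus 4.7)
The plan is to exploit the fact that Lemma~\ref{lemma:modular-law} gives identical variation laws to $\fm$ and $\fmod$: both decrement by $1$ modulo $t+1$ under $\rho\mapsto\rhoCT$ and $\rho\mapsto\rhoDC$, and both increment by $1$ modulo $t+1$ under $\rho\mapsto\rhoCC$ and $\rho\mapsto\rhoDT$. My first observation will therefore be that the residue $h(\rho):=\fmod(\rho)-\fm(\rho)\pmod{t+1}$ is invariant under each of the four swap operations.

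Next I will argue that the graph on permutations of $[t\fullmoon,t\triangle,1\times]$ whose edges are these four swaps is connected, so the invariance above forces $h$ to be a global constant, which I will call $C$. Connectivity is elementary: the operations $\rhoCC,\rhoDC$ transport $\times$ past a single $\fullmoon$ in either cyclic direction, which lets me park $\times$ at any desired position; and $\rhoCT$, by sliding the CW-next $\triangle$ across every $\fullmoon$ that lies between $\times$ and that $\triangle$, realizes a transposition of adjacent symbols in the underlying cyclic $\fullmoon/\triangle$ subsequence. Together these suffice to generate the full symmetric action on arrangements of $\fullmoon$'s and $\triangle$'s while simultaneously moving $\times$ arbitrarily, so the graph is connected.

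To identify $C$, I will evaluate both functions on the reference permutation $\rho^*$ having $\fullmoon$'s at positions $1,\ldots,t$, $\times$ at position $t+1$, and $\triangle$'s at positions $t+2,\ldots,2t+1$. For this $\rho^*$ one has $\sum_{j=1}^{t} O^{\rho^*}_j=t(t+1)/2$ and $\rankCCWT(\rho^*)=t+1$ (scanning positions from $2t+1$ down to $1$ encounters all $t$ $\triangle$'s before reaching $\times$), so the first formula in Lemma~\ref{lemma:modular-lf} gives $\fm(\rho^*)\equiv (t+1)-t(t+1)/2\pmod{t+1}$. On the other hand, scanning CW from $\times$ one meets all $\triangle$'s before any $\fullmoon$, so there are no CW tuples $(\times,\fullmoon,\triangle)$ and $\fmod(\rho^*)\equiv 0\pmod{t+1}$.

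A final parity split closes the computation. When $t$ is even, $t(t+1)/2\equiv 0\pmod{t+1}$ since $t/2\in\mathbb{Z}$, hence $\fm(\rho^*)\equiv\fmod(\rho^*)\equiv 0$ and $C=0$. When $t$ is odd, $(t+1)/2\in\mathbb{Z}$ and $t(t+1)/2\equiv(-1)\cdot(t+1)/2\equiv(t+1)/2\pmod{t+1}$, so $\fm(\rho^*)\equiv(t+1)/2$ and $C\equiv-(t+1)/2\equiv(t+1)/2\pmod{t+1}$. I had expected the connectivity step to be the main obstacle, but on reflection it is routine once one notices that $\rhoCT$ drags a $\triangle$ past every intermediate $\fullmoon$; the real content lies in the clean formulas for $\fm$ and $\fmod$ established earlier, which make the reference-point evaluation transparent.
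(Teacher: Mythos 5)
Your proposal matches the approach the paper evidently has in mind (the paper only says ``trivial; proof omitted''): the invariance of $h(\rho)=\fmod(\rho)-\fm(\rho)\pmod{t+1}$ under the four swaps follows immediately from Lemma~\ref{lemma:modular-law}, and your evaluation at $\rho^*$ is exactly right, giving $\fm(\rho^*)\equiv (t+1)-t(t+1)/2$ and $\fmod(\rho^*)\equiv 0$, hence $C=0$ for $t$ even and $C=(t+1)/2$ for $t$ odd.

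The one place you are imprecise is the connectivity of the swap graph. As stated, $\rhoCC$ does not ``transport $\times$ past a single $\fullmoon$'' in general --- it jumps $\times$ over every intervening $\triangle$ as well --- so $\rhoCC,\rhoDC$ alone do not ``park $\times$ at any desired position'' (they only land $\times$ on former $\fullmoon$ positions). Likewise $\rhoCT$ realizes a single adjacent transposition of the underlying $\fullmoon/\triangle$ word only when exactly one $\fullmoon$ sits between $\times$ and the CW-next $\triangle$; otherwise it drags that $\triangle$ past $a>1$ circles at once. The connectivity claim is nevertheless true, and a cleaner way to see it is structural: iterating $\rhoCT$ keeps the $\fullmoon$-positions fixed and cycles $\times$ through all $t+1$ non-$\fullmoon$ slots, so all permutations with a given $A\in\calP_t$ form a $(t+1)$-cycle; symmetrically, iterating $\rhoCC$ gives a $(t+1)$-cycle for each fixed $A'\in\calP_{t+1}$. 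The swap graph is therefore connected iff the ``fiber graph'' linking a fixed-$\fullmoon$ cycle to a fixed-$\triangle$ cycle whenever they share a permutation is connected; that fiber graph is exactly $H(2t+1,t)$, whose connectivity is elementary. With this patch the argument is complete.
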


At last, we point out that the resolved functions of $\fm$ or $\fmod$ can easily be deduced according to the original definition of modular factorization.
\bibliographystyle{splncs04}
\bibliography{1fac}

\begin{thebibliography}{10}
\providecommand{\url}[1]{\texttt{#1}}
\providecommand{\urlprefix}{URL }
\providecommand{\doi}[1]{https://doi.org/#1}

\bibitem{Hat-auc1}
Aggarwal, G., Fiat, A., Goldberg, A., Hartline, J., Immorlica, N., Sudan, M.:
  Derandomization of auctions. In: Proceedings of the Thirty-seventh Annual ACM
  Symposium on Theory of Computing. pp. 619--625. STOC '05, ACM (2005)

\bibitem{Lexicographic}
Aigner, M.: Lexicographic matching in {Boolean} algebras. Journal of
  Combinatorial Theory, Series B  \textbf{14}(3),  187--194 (1973)

\bibitem{inc-matrix2}
Bapat, R.: {Moore--Penrose} inverse of set inclusion matrices. Lin. Alg. and
  its App.  \textbf{318}(1),  35--44 (2000)

\bibitem{Hat-auc2}
Ben-Zwi, O., Newman, I., Wolfovitz, G.: Hats, auctions and derandomization.
  Random Structures \& Algorithms  \textbf{46}(3),  478--493 (2015)

\bibitem{TheoryPA}
Bierbrauer, J., Edel, Y.: Theory of perpendicular arrays. Journal of
  Combinatorial Designs  \textbf{2}(6),  375--406 (1994)

\bibitem{Hat2}
Butler, S., Hajiaghayi, M., Kleinberg, R., Leighton, T.: Hat guessing games.
  SIAM Review  \textbf{51}(2),  399--413 (2009)

\bibitem{CRCHANDBOOk}
Colbourn, C., Dinitz, J. (eds.): CRC Handbook of Combinatorial Designs. CRC
  Press, Inc, 2 edn. (2007)

\bibitem{chain4}
D\"{a}ubel, K., J\"{a}ger, S., M\"{u}tze, T., Scheucher, M.: On orthogonal
  symmetric chain decompositions. CoRR  \textbf{abs/1810.09847} (2018)

\bibitem{CycleLemma}
Dershowitz, N., Zaks, S.: The cycle lemma and some applications. European
  Journal of Combinatorics  \textbf{11}(1),  35--40 (1990)

\bibitem{MODULAR94}
Duffus, D., Kierstead, H., Snevily, H.: An explicit 1-factorization in the
  middle of the {Boolean} lattice. J. of Comb. Theory, Series A
  \textbf{65}(2),  334--342 (1994)

\bibitem{Hat-hamming-ebert}
Ebert, T., Merkle, W., Vollmer, H.: On the autoreducibility of random
  sequences. SIAM Journal on Computing  \textbf{32}(6),  1542--1569 (2003)

\bibitem{anotherchain}
Ghorbani, E., Khosrovshahi, G., Maysoori, C., Mohammad-Noori, M.: Inclusion
  matrices and chains. J. of Comb. Theory, Series A  \textbf{115}(5),  878--887
  (2008)

\bibitem{Bracket76}
Greene, C., Kleitman, D.: Strong versions of {Sperner's} theorem. Journal of
  Combinatorial Theory, Series A  \textbf{20}(1),  80--88 (1976)

\bibitem{midlevel18}
Gregor, P., M\"{u}tze, T., Nummenpalo, J.: A short proof of the middle levels
  theorem. CoRR  \textbf{abs/1710.08249} (2018)

\bibitem{Hall35}
Hall, P.: On representatives of subsets. Journal of the London Mathematical
  Society  \textbf{s1-10}(1),  26--30 (1935)

\bibitem{Hat-new}
Jin, K., Jin, C., Gu, Z.: Cooperation via codes in restricted hat guessing
  games. In: Inter. Conf. on Autonomous Agents and Multiagent Systems (2019)

\bibitem{LEXCIAL88}
Kierstead, H., Trotter, W.: Explicit matchings in the middle levels of the
  {Boolean} lattice. Order  \textbf{5}(2),  163--171 (1988)

\bibitem{book-KT}
Kleinberg, J., Tardos, E.: Algorithm Design. Addison-Wesley Longman Publishing
  Co., Inc., Boston, MA, USA (2005)

\bibitem{PA-larget}
Kramer, E., Wu, Q., Magliveras, S., Trung, T.: Some perpendicular arrays for
  arbitrarily large $t$. Discrete Mathematics  \textbf{96}(2),  101--110 (1991)

\bibitem{Hat3}
Ma, T., Sun, X., Yu, H.: A new variation of hat guessing games. In: Computing
  and Combinatorics. pp. 616--626. Springer Berlin Heidelberg (2011)

\bibitem{MIDLEVEL16}
M\"{u}tze, T.: Proof of the middle levels conjecture. Proc. of the London
  Mathematical Society  \textbf{112}(4), ~677 (2016)

\bibitem{Mutze17soda}
M\"{u}tze, T., Nummenpalo, J.: A constant-time algorithm for middle levels
  {Gray} codes. In: Proc. of the 28th Annual ACM-SIAM Symposium on Discrete
  Algorithms. pp. 2238--2253. Society for Industrial and Applied Mathematics
  (2017)

\bibitem{MS17-H}
M{\"u}tze, T., Su, P.: Bipartite kneser graphs are hamiltonian. Combinatorica
  \textbf{37}(6),  1207--1219 (Dec 2017)

\bibitem{RaneyLemma}
Neylon, T.: Notes on {Raney's} lemmas. Tech. rep. (2015)

\bibitem{Antipodal-code}
Ordentlich, E., Roth, R.: Low complexity two-dimensional weight-constrained
  codes. IEEE Transactions on Information Theory  \textbf{58}(6),  3892--3899
  (June 2012)

\bibitem{PA-rao}
Rao, C.: Combinatorial arrangements analogous to orthogonal arrays.
  Sankhy\={a}: The Indian Journal of Statistics, Series A  \textbf{23}(3),
  283--286 (1961)

\bibitem{chain3}
Spink, H.: Orthogonal symmetric chain decompositions of hypercubes. CoRR
  \textbf{abs/1706.08545} (2018)

\bibitem{PA-3homo}
Stinson, D., Teirlinck, L.: A construction for authentication/secrecy codes
  from 3-homogeneous permutation groups. E. J. of Combinatorics
  \textbf{11}(1),  73 -- 79 (1990)

\bibitem{Recur77}
White, D., Williamson, S.: Recursive matching algorithms and linear orders on
  the subset lattice. Journal of Combinatorial Theory, Series A
  \textbf{23}(2),  117--127 (1977)

\bibitem{inc-matrix1}
Wilson, R.: Incidence matrices of $t$-designs. Lin. Alg. and its App.
  \textbf{46},  73--82 (1982)

\end{thebibliography}

\clearpage

\appendix

\section{Proofs omitted in section~\ref{sect:resolvable-construct}}\label{sect:resolvable-proof}

This appendix contains the proofs of Lemmas~\ref{lemma:some-resolvable-cases}, \ref{lemma:regardlessofcolor}, and \ref{lemma:PA-CPA}.

\subsubsection{Restatement of Lemma~\ref{lemma:some-resolvable-cases}.}
\emph{\begin{enumerate}
\item Any 1-factorization of $H(v=4,t=1)$ is resolvable.
\item Any 1-factorization of $H(2t+1,t)$, including the lexical factorization and modular factorization (as illustrated in Fig.~\ref{fig:g_A_73}), is resolvable.
\item No 1-factorization of $H(6,2)$ is resolvable.
\end{enumerate}}

\begin{figure}[h]
  \centering
  \includegraphics[width=.65\textwidth]{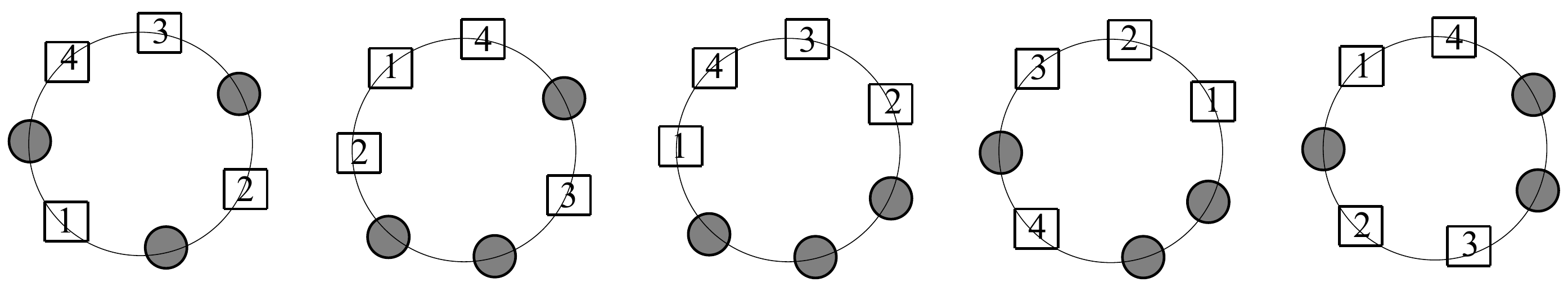}\\ \bigskip
  \includegraphics[width=.65\textwidth]{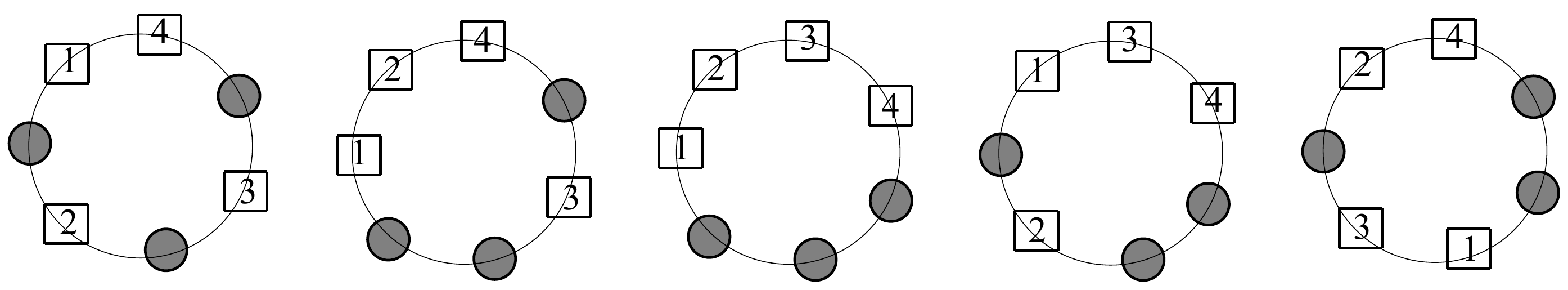}\\
  \caption{Top (bottom) shows resolvable functions for $\fmod$ ($\flex$) for $v=7,t=3$.}\label{fig:g_A_73}
\end{figure}

\begin{proof}[of claim 1 of Lemma~\ref{lemma:some-resolvable-cases}]
Consider any 1-factorization of $H(4,1)$.
Assume its labeling function is $f$.
We shall find $\{g_A \mid A\in \calP_1\}$ and $\gamma$ so that $f=f_{\gamma,g}$.

First, choose $\gamma$ to be any bijection from the $2$-element subset of $[3]$ to $[3]$.
Then, define $g_{\{1\}}$ as follows and define $g_{\{2\}}$, $g_{\{3\}}$, $g_{\{4\}}$ using a similar idea.

Observe that there must exist distinct numbers $a,b,c$ so that $f(\{1\},\{1,2,3\})=\gamma(\{a,b\})$
and $f(\{1\},\{1,2,4\})=\gamma(\{a,c\})$ and $f(\{1\},\{1,3,4\})=\gamma(\{b,c\}$.
This is because $\{f(\{1\},\{1,2,3\}),f(\{1\},\{1,2,4\}),f(\{1\},\{1,3,4\})\}=\{1,2,3\}$ whereas
  the preimages of $1,2,3$ under $\gamma$ are the three 2-element subsets of $[3]$.

By choosing $a,b,c$ as the values of $g_{\{1\}}(2),g_{\{1\}}(3),g_{\{1\}}(4)$ respectively,
\begin{eqnarray*}
f_{\gamma,g}(\{1\},\{1,2,3\}) = \gamma(g_{\{1\}}(\{2,3\})) = \gamma(\{a,b\}) & = f(\{1\},\{1,2,3\})\\
f_{\gamma,g}(\{1\},\{1,2,4\}) = \gamma(g_{\{1\}}(\{2,4\})) = \gamma(\{a,c\}) & = f(\{1\},\{1,2,4\})\\
f_{\gamma,g}(\{1\},\{1,3,4\}) = \gamma(g_{\{1\}}(\{3,4\})) = \gamma(\{b,c\}) & = f(\{1\},\{1,3,4\})
\end{eqnarray*}
Therefore, it is always possible to find a group of $\{g_A\}$ such that $f=f_{\gamma,g}$. \qed
\end{proof}

\begin{proof}[of claim 2 of Lemma~\ref{lemma:some-resolvable-cases}]
Consider any 1-factorization of $H(2t+1,t)$.
Assume its labeling function is $f$.
Choose $\gamma$ to be the natural bijection from all the $1$-element sets of $[t+1]$ to $[t+1]$, which maps $\{x\}$ to $x$.
We shall find $\{g_A\}$ so that $f=f_{\gamma,g}$.
For any $t$-subset $A$ of $[2t+1]$, we define $g_A$ as follows.
\begin{equation}
g_A(x):=f(A,A\cup \{x\})\quad (\forall x\in A^C).
\end{equation}

Because $f$ defines a 1-factorization, it satisfies condition (a). Therefore, when $x$ is taken over all elements in $A^C$, function
  $f(A,A\cup \{x\})$ would be taken over all numbers in $[t+1]$.  This means that $g_A$ is indeed a bijection from $A^C$ to $[t+1]$.
  Moreover, it is straightforward to see $f_{\gamma,g}(A,A \cup \{x\})\equiv f(A,A\cup      \{x\})$. \qed
\end{proof}

\begin{proof}[of claim 3 of Lemma~\ref{lemma:some-resolvable-cases}]
We design a short \emph{C++} program which searches all the resolvable 1-factorizations of $H(6,2)$ and $H(8,3)$ by brute force,
which can be downloaded at \url{https://github.com/cscjjk/resolvable-1-factorization}.

\smallskip For $H(8,3)$, the program returns many solutions.

For $H(6,2)$, the program runs in less than five seconds and finds no solution.
This shows that there is no resolvable 1-factorization of $H(6,2)$.

\smallskip We note that this claim is not so important for this manuscript because no result is depending on this claim. So we only prove it by a \emph{C++} program. \qed
\end{proof}

\subsubsection{Restatement of Lemma~\ref{lemma:regardlessofcolor}.}
\emph{Given $\{g_A\}$ and $\gamma$, the following are equivalent:
\begin{enumerate}
\item[(i)] function $f_{\gamma,g}(A,A')$ satisfies condition (b); and
\item[(ii)] When $A_1\neq A_2$ and $(A_1,A'_1),(A_2,A'_2)$ are two edges in $H(v,t)$, then $g_{A_1}(A'_1-A_1)= g_{A_2}(A'_2-A_2)$ implies that $A'_1\neq A'_2$.
\end{enumerate}}

\begin{proof}[of Lemma~\ref{lemma:regardlessofcolor}]
Assume (i) holds. Proving (ii) is equivalent to proving that
    for $A_1,A_2,A'_1,A'_2$ such that $A_1\neq A_2$ and $(A_1,A'_1),(A_2,A'_2)$ are two edges in $H(v,t)$,
  equality $A'_1=A'_2=A'$ implies that $g_{A_1}(A'_1-A_1)\neq g_{A_2}(A'_2-A_2)$.

Because condition (b) is satisfied by $f_{\gamma,g}$ whereas $A_1\neq A_2$ and $A'_1=A'_2=A'$, we get
$f_{\gamma,g}(A_1,A')\neq f_{\gamma,g}(A_2,A')$, i.e., $\gamma (g_{A_1}(A'-A_1))\neq \gamma (g_{A_2}(A'-A_2)).$
Further since $\gamma$ is a bijection, $g_{A_1}(A'_1-A_1) \neq g_{A_2}(A'_2-A_2)$. \medskip

Now we prove (i) from (ii). Assume (ii) holds. For any two distinct edges $(A_1,A')$ and $(A_2,A')$ of $H(v,t)$,
  we shall prove that $f_{\gamma,g}(A_1,A')\neq f_{\gamma,g}(A_2,A')$, namely, $g_{A_1}(A'-A_1)\neq g_{A_2}(A'-A_2)$.
Suppose to the opposite that $g_{A_1}(A'-A_1)=g_{A_2}(A'-A_2)$, we get $A'\neq A'$ according to (ii), which is contradictory. \qed
\end{proof}

\subsubsection{Restatement of Lemma~\ref{lemma:PA-CPA}.}
\emph{Any $t+d$ columns of a $\PA(t,2t+d,2t+d)$ form a $\CPA(t,t+d,2t+d)$.}

\begin{proof}[of Lemma~\ref{lemma:PA-CPA}]
Suppose $M'$ contains any $t+d$ columns of a $\PA(t,2t+d,2t+d)$ $M$.
By the definition of perpendicular arrays, $M'$ is still a $\PA(t,t+d,2t+d)$.
Therefore, we only need to show that $M'$ is complete.

Let $A_i$ denote the set of elements in the $i$-th row of $M$ among those $t$ columns which are not chosen in constructing $M'$.
Because $M$ is $\PA(t,2t+d,2t+d)$, sets $A_1,\ldots, A_{2t+d \choose t}$ go through every $t$-element subset of $[2t+d]$ exactly once.
Therefore, $[2t+d]-A_1,\ldots,[2t+d]-A_{2t+d \choose t}$ go through every $(t+d)$-element subset of $[2t+d]$ exactly once.
However, the elements in each row of $M'$ are respectively $[2t+d]-A_1,\ldots,[2t+d]-A_{2t+d \choose t}$.
Together, $M'$ is complete. \qed
\end{proof}

\clearpage

\section{Proofs omitted in section~\ref{sect:lexical}}\label{sect:lexical-proof}

\subsubsection{Restatement of Lemma~\ref{lemma:cyclic-unique}.}
\emph{Given any sequence $S$ of $t$ `)'s and $t+1$ `('s.
There exists a unique cyclic-shift $S^{(j)}$ of $S$ whose first $2t$ parentheses are paired up when parenthesized,
and we can compute $j$ in $O(t)$ time.}

\begin{proof}[of Lemma~\ref{lemma:cyclic-unique}]
Assume $S$ is a sequence of $v=2t+1$ parentheses, $t$ of which are `)'s.
We are interested in finding a cyclic-shift of $S$ in which the first $2t$ parentheses can be paired up when parenthesizing.

Denote $H_i =\hbox{the number of `)'s} - \hbox{the number of `('s in }s_1,\ldots,s_i$ for each $i$.
Draw points $\{(i,H_i)\mid 0\leq i\leq v\}$ in the Cartesian plane, as shown in Fig.~\ref{fig:cyclicunique}.

Select the highest point $(j^*,H_{j^*})$; for a tie, select the rightmost one.

When $j\neq j^*+1$, the cyclic-shift $S^{(j)}$ does not satisfy our requirement.
This is because when $j\neq j^*+1$, the one shifted from $s_{j^*+1}$, which is a left parenthesis, cannot be paired up.
When $j=j^*+1$, the cyclic-shift $S^{(j)}$ satisfies our requirement. This is simply illustrated in the figure.
To complete, we point out that index $j^*+1$ can easily be computed in $O(t)$ time.

This lemma also follows from Cycle Lemma \cite{CycleLemma} or Raney Lemma \cite{RaneyLemma}.  \qed
\end{proof}

\begin{figure}[b]
\begin{minipage}[b]{.55\textwidth}
  \centering
  \includegraphics[width=.9\textwidth]{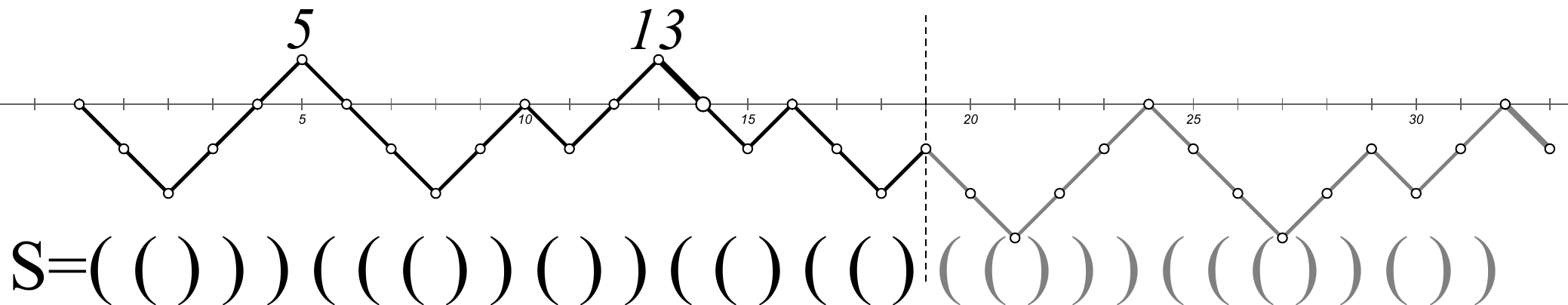}\\
  \caption{Illustration of Lemma~\ref{lemma:cyclic-unique}.\newline This figure draws Example~\ref{example:S}.}\label{fig:cyclicunique}
\end{minipage}
\begin{minipage}[b]{.45\textwidth}
\centering \includegraphics[width=.8\textwidth]{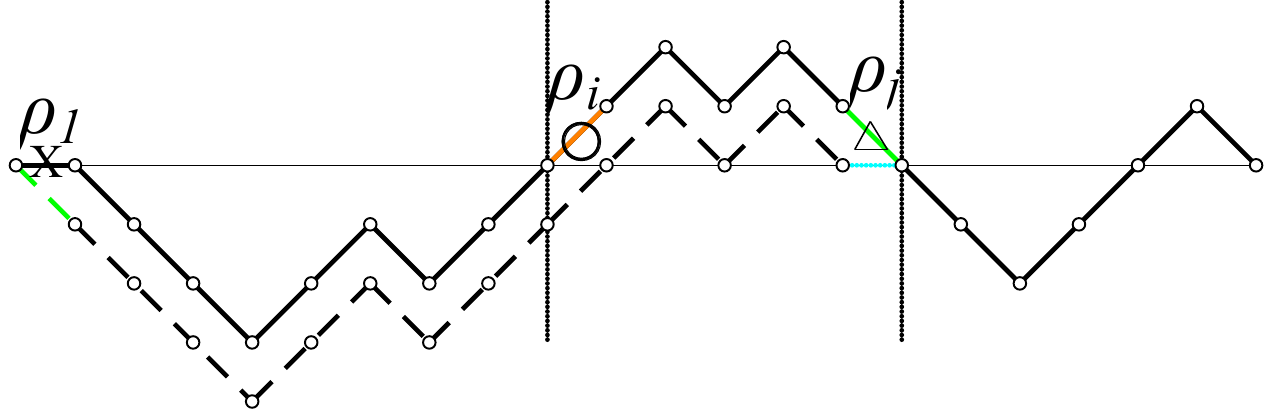}\\
  \caption{Illustration of variation laws below. The dotted line indicates $\rhoRT$.}\label{fig:positivelaw}
\end{minipage}
\end{figure}

\subsubsection{Restatement of Lemma~\ref{lemma:index-stronger}.}
\emph{When $S$ is canonical, for any $s_l=($ and $s_r=)$,
  there are more $)$'s than $($'s in $\{s_{l+1},\ldots,s_{r}\}$ $\Leftrightarrow$ $\ind(s_l)\geq \ind(s_r)$.}

\begin{proof}[of Lemma~\ref{lemma:index-stronger}] For each $i~(0\leq i\leq 2t+1)$, denote

\quad $H_i =\hbox{the number of `)'s} - \hbox{the number of `('s in }s_1,\ldots,s_i$, and

\quad $H'_i=\hbox{the number of `)'s} - \hbox{the number of `('s in }s_{i+1},\ldots,s_{2t+1}$.

To be clear, $H_0=H'_{2t+1}=0$. Trivially, we have:

$H_i+H'_i=-1 ~(\hbox{for all }i).\qquad \dep(s_l)=-H_l-1. \qquad \dep(s_r)=-H_r.$

We need to discuss two cases depending on which one of $l,r$ is smaller.
\begin{itemize}
\item Case~1: $l<r$. Let us count the number of ')'s  minus the number of '('s in the interval $s_{l+1},\ldots,s_r$. This is given by $H_r-H_l$. Therefore,
there are more ')'s in this interval $\Leftrightarrow H_r-H_l>0 \Leftrightarrow -\dep(s_r)+\dep(s_l)+1>0  \Leftrightarrow \dep(s_r)\leq \dep(s_l) \Leftrightarrow \ind(s_r) \leq \ind(s_l) ~\hbox{(when $l<r$)}$. \smallskip

\item Case~2: $r<l$. The number of ')' minus the number of '('s in (cyclic) interval $s_{l+1},\ldots,s_{2t+1},s_1,\ldots,s_r$ is given by $H'_l+H_r=-1-H_l+H_r$.  So,
there are more ')'s in this interval $\Leftrightarrow -1-H_l+H_r>0 \Leftrightarrow \dep(s_l)-\dep(s_r)>0
\Leftrightarrow \dep(s_r)< \dep(s_l)  \Leftrightarrow \ind(s_r)\leq \ind(s_l) ~\hbox{(when $r<l$)}.$

\end{itemize}
In either case, the lemma holds. \qed
\end{proof}

%\begin{proof}[of Theorem~\ref{thm:alg}]
%\emph{Correctness.} Construct $\rho$ from $S=T^{(j)}$ and $l=z-j$ as described in Corollary~\ref{corol:index}.1.
%  The following imply that $(A,A')\in \calL_i$: (1) $\rho^{(-j)}$ is the permutation corresponding to edge $(A,A')$ and (2) $\flex(\rho^{(-j)})=i$.
%   Proof of (1): Because the set of positions of $\fullmoon$'s in $\rho^{(-j)}$ is the same as the set of positions of ')'s in $T$, which equals $A$,
%and the position of $\times$ in $\rho^{(-j)}$ is at $z$, we get (1).
%   Proof of (2): $\flex(\rho^{(-j)})=\flex(\rho)=\ind (s_{z-j})=i$. This applies Corollary~\ref{corol:index}.1.
%\end{proof}

\clearpage

\subsection{The full version of the variation laws of $\flex$}\label{subsect:flex-law}

Consider any permutation $\rho$ of $[t \fullmoon, t \triangle, 1 \times]$. For any character $\triangle$ or $\fullmoon$ in $\rho$,

we say it is \emph{CW-balanced} if there are equal number of $\triangle$'s and $\fullmoon$'s
  in the (cyclic) interval of $\rho$ starting from $\times$ to this character in CW order, and

we say it is \emph{CCW-balanced} if there are equal number of $\triangle$'s and $\fullmoon$'s
  in the (cyclic) interval of $\rho$ starting from $\times$ to this character in CCW order.

\smallskip The following lemma is the full version of Lemma~\ref{lemma:flex-law}.

\begin{lemma}[Variation laws of $\flex$]\label{lemma:flex-law-pre}
\begin{enumerate}
\item $\flex(\rho)\neq t+1$ $\Leftrightarrow$ there is a CW-balanced $\triangle$ $\Leftrightarrow$ there is a CCW-balanced $\fullmoon$.\smallskip
\item $\flex(\rho)\neq t$ $\Leftrightarrow$ there is a CW-balanced $\fullmoon$ $\Leftrightarrow$  there is a CCW-balanced $\triangle$.\smallskip
\item When $\flex(\rho)\neq t+1$, let $\rhoRT$ ($\rhoLC$) be constructed from $\rho$ by swapping $\times$ with the CW first CW-balanced $\triangle$ (the CCW first CCW-balanced $\fullmoon$).\\ Then, $$\flex(\rhoRT)=\flex(\rhoLC)=(\flex(\rho)-1) \mod (t+1)(\in [t+1]).$$
\item When $\flex(\rho)\neq t$, let $\rhoRC$ ($\rhoLT$) be constructed from $\rho$ by swapping $\times$ with the CW first CW-balanced $\fullmoon$ (the CCW first CCW-balanced $\triangle$).\\ Then, $$\flex(\rhoRC)=\flex(\rhoLT)=(\flex(\rho)+1) \mod (t+1)(\in [t+1]).$$
\end{enumerate}
\end{lemma}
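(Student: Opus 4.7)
The plan is to prove parts 1 and 2 by a Dyck-path / random-walk argument, and to reduce parts 3 and 4 to a pair of index-shift identities on the canonical parenthesization, plus a duality argument.

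For parts 1 and 2, let $b_i$ denote the number of $\fullmoon$'s minus the number of $\triangle$'s among the first $i$ CW characters after $\times$; then $b_0 = b_{2t} = 0$ with each step $\pm 1$. A positive $\fullmoon$ at CW position $i$ is an up-step with $b_i > 0$, a CW-balanced $\triangle$ is a down-step with $b_i = 0$ (hence $b_{i-1} = 1$). Since the walk begins and ends at $0$, the existence of a positive $\fullmoon$ (equivalently $\max_i b_i \geq 1$) forces a later descent from $1$ to $0$, which is a CW-balanced $\triangle$; conversely a CW-balanced $\triangle$ forces a preceding ascent above $0$, producing a positive $\fullmoon$. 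For the CCW equivalence, observe that the CCW partial-sum walk satisfies $c_i = -b_{2t-i}$: a CCW-balanced $\fullmoon$ at CCW position $i$ is exactly a CW up-step from $b_{2t-i} = 0$ to $b_{2t-i+1} = 1$, which exists iff $\max_i b_i > 0$. Part 2 is symmetric, swapping the roles of $\fullmoon$ and $\triangle$ (and the sign of $b$).

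For parts 3 and 4, the core identity is $\flex(\rho) \equiv \ind(s_x) \pmod{t+1}$ from subsection~\ref{subsect:g_A-flex}. I first handle the ``easy'' sub-cases $\rhoRT$ (in part 3) and $\rhoLT$ (in part 4), which swap $\times$ with a $\triangle$. Since both $\times$ and $\triangle$ correspond to $($ in the parenthesis sequence $S$ of $A$, the swap leaves $S$, and hence the canonical cyclic-shift and all indices, unchanged. Thus the claim reduces to showing $\ind(s_y) = \ind(s_x) - 1$ when $y$ is the CW first CW-balanced $\triangle$, and $\ind(s_y) = \ind(s_x) + 1$ when $y$ is the CCW first CCW-balanced $\triangle$. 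This is a two-left-parenthesis extension of Lemma~\ref{lemma:index-stronger}: the balance condition that the CW interval $(x, y]$ contains equal numbers of $($'s and $)$'s pins down the depth of $s_y$ relative to $s_x$ in the canonical cyclic-shift, and the ``first'' qualifier selects the correct adjacent neighbor in the right-to-left-within-depth ordering of Definition~\ref{def:index}. The $H_i$-differencing bookkeeping from the proof of Lemma~\ref{lemma:index-stronger} adapts directly, with casework on whether the match occurs within the same matched pair containing $s_x$ or by stepping out one level.

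The ``hard'' sub-cases $\rhoLC$ in part 3 and $\rhoRC$ in part 4 swap $\times$ with a $\fullmoon$, which alters both $A$ and $S$; I reduce them to the easy sub-cases via duality. Let $\rho^*$ be the dual (swap $\fullmoon$'s with $\triangle$'s). A CCW-balanced $\fullmoon$ in $\rho$ at some position is, at that same position, a CCW-balanced $\triangle$ in $\rho^*$, and the ``CCW first'' selection commutes with dualization, so $(\rhoLC)^*$ equals the $\rhoLT$-swap applied to $\rho^*$. The easy sub-case then gives $\flex((\rhoLC)^*) = \flex(\rho^*) + 1$, and combining with the duality identity $\flex(\rho) + \flex(\rho^*) \equiv t \pmod{t+1}$ (Theorem~\ref{thm:flex-a-b}) yields $\flex(\rhoLC) = t - \flex((\rhoLC)^*) = \flex(\rho) - 1$. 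The $\rhoRC$ case is identical with $\rhoLT$ replaced by $\rhoRT$. The main technical obstacle is the two-left-parenthesis extension of Lemma~\ref{lemma:index-stronger}, whose casework must carefully track how Definition~\ref{def:index}'s depth-then-right-to-left ordering interacts with the balance condition.
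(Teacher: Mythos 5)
Your parts 1--2 argument is the same height/walk argument the paper uses; fine.

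For parts 3--4 you take a genuinely different route. The paper assumes $\rho_1=\times$, picks $\rho_i$ = CW first $\fullmoon$ with height $1$ and $\rho_j$ = CW first $\triangle$ with height $0$, and shows directly that swapping $\rho_1\leftrightarrow\rho_j$ decreases the number of positive $\fullmoon$'s by exactly one: $\rho_i$ flips from positive to non-positive; every other $\fullmoon$ in positions $2,\ldots,j$ has its height drop by $1$ without changing sign (because heights before $\rho_i$ are $\leq 0$ and heights strictly between $\rho_i$ and $\rho_j$ are $\geq 2$); and $\fullmoon$'s beyond $\rho_j$ are unaffected. The other three equalities are then invoked by symmetry. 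You instead route through the identity $\flex(\rho)\equiv\ind(s_x)\pmod{t+1}$ from subsection~\ref{subsect:g_A-flex}: you observe that swapping $\times$ with a $\triangle$ leaves the parenthesis sequence $S$ (hence the canonical shift and all indices) unchanged, reduce $\rhoRT$ and $\rhoLT$ to a ``two-left-parenthesis'' index-shift claim ($\ind(s_y)=\ind(s_x)\mp1$ for the CW/CCW first balanced $\triangle$), and derive $\rhoLC$, $\rhoRC$ by applying the $\triangle$-swap cases to $\rho^*$ together with the duality identity $\flex(\rho)+\flex(\rho^*)\equiv t$. This decomposition into ``$\triangle$-swaps preserve $S$'' plus ``dualize to handle $\fullmoon$-swaps'' is cleaner and more explicit than the paper's terse ``the four equations are symmetric.'' What it buys: once the index machinery of Definition~\ref{def:index} is set up, you get a uniform proof that also explains \emph{where} the new $\times$ lands in index order, which is exactly what the algorithms of subsection~\ref{subsect:alg} need. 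What the paper's approach buys: it is elementary and self-contained, requiring no index bookkeeping beyond what's already in the proof of Theorem~\ref{thm:flex-a-b}.

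One caution: the index-shift claim is the real content of your argument and you leave it as a sketch. It is true, but it is not quite ``the $H_i$-differencing of Lemma~\ref{lemma:index-stronger} adapting directly.'' Working in the canonical shift with $s_x$ at depth $d=\dep(s_x)$, the balance condition forces $\dep(s_y)=d$ when $y$ lies CW strictly between $x$ and the wrap, and $\dep(s_y)=d-1$ when $y$ is reached after wrapping past the unmatched $($. You then need the observation that, by Definition~\ref{def:index}'s depth-then-right-to-left rule, the first qualifying position CW of $x$ is either the next left parenthesis to the right of $s_x$ at depth $d$ (index exactly $\ind(s_x)-1$ since indices within a depth block are contiguous and decrease rightward), or, if $s_x$ is rightmost at depth $d$, the leftmost left parenthesis at depth $d-1$ (again index $\ind(s_x)-1$, using that the depth-$d$ block starts where the depth-$(d-1)$ block ends, and that the enclosing parent of $s_x$ guarantees the leftmost depth-$(d-1)$ paren sits at a position $<x$ so it \emph{is} encountered after the wrap). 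These two cases and the contiguity-of-index-blocks argument should be spelled out; as written the proposal names the obstacle but does not clear it.
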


\begin{proof}[of Lemma~\ref{lemma:flex-law-pre}] Without loss of generality, assume that $\rho_1=\times$.
For each $i~(1\leq i\leq v)$, define the \emph{height} of $\rho_i$ as the number of $\fullmoon$'s minus the number of $\triangle$'s in $\{\rho_1,\ldots,\rho_i\}$.
(So, a $\fullmoon$ is positive if and only if its height is positive.)

\medskip \noindent \emph{Proof of Claim~1.} Assume $\flex(\rho)\neq t+1$. In this case there exists some $\fullmoon$ with positive height.
This implies that there exists a pair of $(i,j)$ such that $\rho_i=\fullmoon$ has a height $1$ while $\rho_j=\triangle$ has a height $0$,
as shown in Fig.~\ref{fig:positivelaw}. Clearly, $\rho_j$ is a CW-balanced $\triangle$ while $\rho_i$ is a CCW-balanced $\fullmoon$.
On the other direction, the existence of a CW-balanced $\triangle$ or a CCW-balanced $\fullmoon$ implies the existence of a positive $\fullmoon$, which immediately implies that $\flex(\rho)\neq t+1$.

Claim~2 is symmetric to Claim~1; proof omitted. \qed

\medskip \noindent \emph{Proof of the equations in Claim~3 and Claim~4.} Because the four equations are symmetric, we only show the proof of $\flex(\rhoRT)=\flex(\rho)-1$.

Without loss of generality, assume $\rho_1=\times$.
Let $\rho_i$ be the CW first $\fullmoon$ with height $1$.
Let $\rho_j$ be the CW first $\triangle$ with height $0$, i.e.\ the CW first CW-balanced $\triangle$.
As illustrated in Fig.~\ref{fig:positivelaw}, $\rhoRT$ is constructed from $\rho$ by swapping $\rho_1$ with $\rho_j$.
We shall prove that after the swapping, the number of positive $\fullmoon$'s decreases by 1.
This follows from three observations:
(i) $\rho_i=\fullmoon$ is positive in $\rho$ (with height 1) but not anymore in $\rhoRT$ (with height 0).
(ii) For other $\fullmoon$'s in $\rho_2,\ldots,\rho_j$, their heights drop by 1, but their positivity do not change.
(iii) For the $\fullmoon$'s in $\rho_{j+1},\ldots,\rho_{2t+1}$, their heights and positivity stay the same as before.\qed
\end{proof}

%\subparagraph*{Some additional comments.} In order to find an explicit 1-factorization of the general bipartite Kneser graph $H(v,t)$, it is a good idea to first try those labeling functions $f$ (from permutations of $[t\fullmoon, t\triangle, (v-2t) \times]$ to $[ {v-t \choose t}]$) which are cyclic and which satisfy that $f(\rho)+f(\rho^*)$ is a constant. (We can first consider the case where $v,t$ are co-prime; for other cases we cannot expect $f$ to be cyclic.)

\clearpage

\section{Proofs omitted in section~\ref{sect:modular}}\label{sect:modular-proof}

\subsubsection{Restatement of Lemma~\ref{lemma:modular-lf}.}
\emph{The labeling function of $\{\calM_i\mid 1\leq i\leq t+1\}$ is}
\begin{eqnarray}
\fm(\rho)&:=&\rankCCWT(\rho)-\Sigma_{j=1}^t O^\rho_j \hspace{-1mm}\pmod{t+1} (\in [t+1]), \hbox{ or }\label{def:fm}\\
\fm(\rho)&:=&1 + \Sigma_{j=1}^t T^\rho_j - \rankCCWC(\rho) \hspace{-1mm}\pmod{t+1} (\in [t+1]). \label{def:fm-equiv}
\end{eqnarray}

\begin{proof}[of Lemma~\ref{lemma:modular-lf}]
We first state two trivial observations:

$(\times\hbox{'s position}) + \Sigma_j O^\rho_j + \Sigma_j T^\rho_j  = 1+\ldots+(2t+1) = 0 (\mod{t+1})$, and

$(\times\hbox{'s position}) + \rankCCWT(\rho)-1 + \rankCCWC(\rho)-1=2t+1=-1 (\mod{t+1})$.

By subtraction, $\rankCCWT(\rho)-\Sigma_{j=1}^t O^\rho_j = 1 + \Sigma_{j=1}^t T^\rho_j - \rankCCWC(\rho) (\mod{t+1})$.
Therefore, the two definitions of $\fm$ given in (\ref{def:fm}) and (\ref{def:fm-equiv}) are equivalent.

Next, we show that $\fm$ is the labelling function of $\{\calM_1,\ldots,\calM_{t+1}\}$.
Recall that $\rho$ represents the edge $(A,A')$ in the middle level graph, where
  $A=\{O^\rho_1,\ldots,O^\rho_t\}$ and $A'=\{O^\rho_1,\ldots,O^\rho_t,\hbox{the position of }\times\}$.
We shall prove that $(A,A')\in \calM_{\fm(\rho)}$.
By the definition of $\calM_{\fm(\rho)}$, it reduces to proving that the single element in $A'-A$ is the $y$-th largest one in $[v]-A$, where $y=(\Sigma A+\fm(\rho)) \mod (t+1)$ ($y\in [t+1]$).
Namely, the unique $\times$ has rank $y$ when enumerating all $\triangle$'s or $\times$ in $\rho$ in CCW;
i.e., $\rankCCWT(\rho)=y\mod (t+1)$.
This holds since $y=\Sigma A+\fm(\rho)=\Sigma_{j=1}^t O^\rho_j + \rankCCWT(\rho)-\Sigma_{j=1}^t O^\rho_j\mod (t+1)$. \qed
\end{proof}

\smallskip Next, recall the variation laws of $\fm$ and $\fmod$ in Lemma~\ref{lemma:modular-law}.
Notice that (\ref{eqn:fmod+1}) and (\ref{eqn:fm+1}) are equivalent to (\ref{eqn:fmod-1}) and (\ref{eqn:fm-1}) respectively.
We now prove (\ref{eqn:fmod-1}) and (\ref{eqn:fm-1}).

\begin{proof}[of (\ref{eqn:fmod-1})]
We only need to prove $\fmod(\rhoCT)=\fmod(\rho)-1 (\mod{t+1})$.
The other equation $\fmod(\rhoDC)=\fmod(\rho)-1 (\mod{t+1})$ in (\ref{eqn:fmod-1}) is symmetric.

See Fig.~\ref{fig:countproof}. Denote by $a$ the number of $\fullmoon$'s between $\times$ and its CW next $\triangle$ in $\rho$.
Recall that $\fmod(\rho)$ denotes the number of $(\times, \fullmoon, \triangle)$-tuples which are located in CW order within $\rho$ (and then modulo $t+1$).
So, $\fmod(\rhoCT)-\fmod(\rho) = (t-a)\cdot 1-a\cdot t \mod (t+1)$,
which implies that $\fmod(\rhoCT) = \fmod(\rho)-1 \mod (t+1)$.
In calculating the difference between $\fmod(\rhoCT)$ and $\fmod(\rho)$, observe that
(i) for the $\fullmoon$'s located CW between the $\triangle$ being swapped and $\times$,
  the number of $(\times, \fullmoon, \triangle)$-tuples related to each of them increases by $1$; and
(ii) for the other $\fullmoon$'s, the number of $(\times, \fullmoon, \triangle)$-tuples related to each of them decreases by $t$. \qed
\end{proof}

\begin{figure}[h]
  \centering
  \includegraphics[width=.23\textwidth]{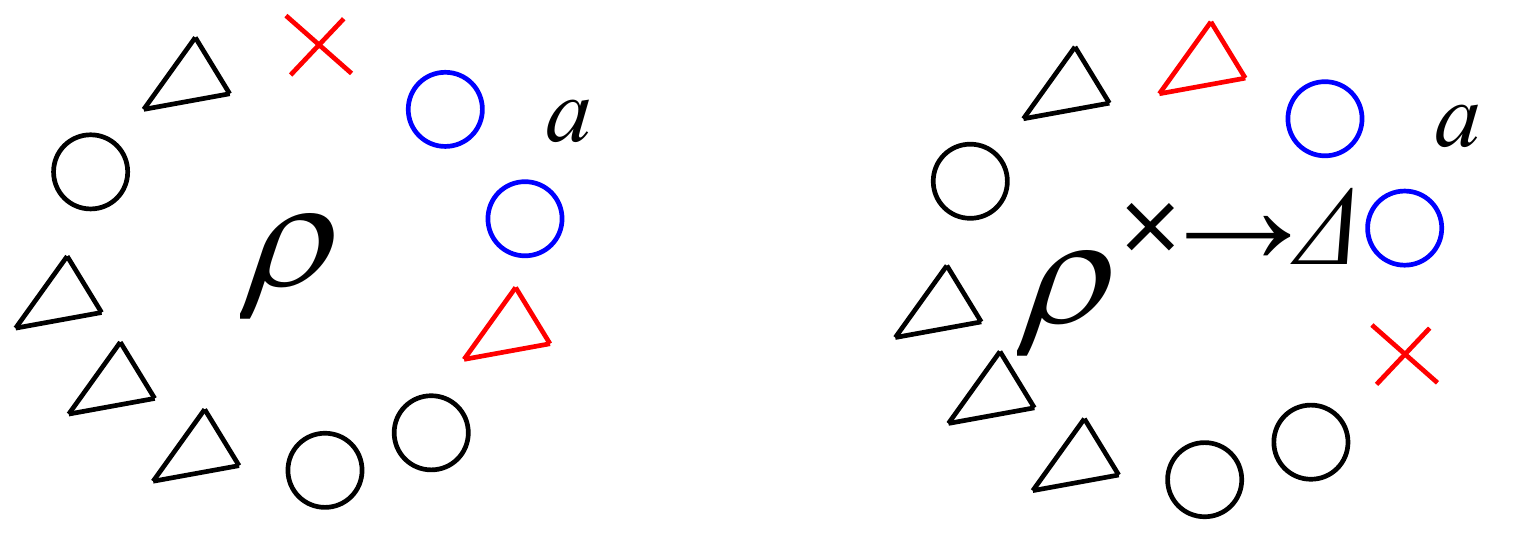}\\
  \caption{Illustration of the proof the variation law of $\fmod$ function..}\label{fig:countproof}
\end{figure}

\begin{proof}[of (\ref{eqn:fm-1})] By swapping $\times$ with its CW next $\triangle$, $\rankCCWT(\rho)$ decreases by 1.
Further by (\ref{def:fm}), $\fm(\rho)=\rankCCWT(\rho)-\Sigma_{j=1}^t O^\rho_j (\mod{t+1})$ decreases by 1.
Similarly, by swapping $\times$ with its CCW next $\fullmoon$, $\rankCCWC(\rho)$ increases by 1.
Further by (\ref{def:fm-equiv}), $\fm(\rho)=1 + \Sigma_{j=1}^t T^\rho_j - \rankCCWC(\rho)(\mod{t+1})$ decreases by 1. \qed
\end{proof}

\clearpage

\section{Explicit 1-factors of the bipartite Kneser graph}\label{sect:chain}

Although, most explicit 1-factorizations of the bipartite Kneser graph $H(v,t)$ have not been found, especially for $v>2t+1$,
  two explicit 1-factors of $H(v,t)$ are known for a long time.
To make the paper more self-contained, in this appendix we briefly review the literature of these 1-factors and give their new definitions.

\newcommand{\CCWRS}{\circlearrowleft}
\newcommand{\CWRS}{\circlearrowright}

\begin{definition}\label{def:1-factor} Assume $A\subset [v]$ and $|A| \leq v/2$.
By the following two steps, we can obtain a subset $A'$ which has equal size as $A$ and is disjoint with $A$, and we define it to be $\CCWRS(A)$.

\smallskip Step~1. Write down all the numbers in $[v]$ to a cycle from $1$ to $v$ in CW order.

\smallskip Step~2. Enumerate each number $a$ in $A$, find the CCW first number from $a$ that is not contained in $A\cup A'$ yet and add it to $A'$.

Note: The order of the enumeration in Step~2 does not matter.
Take $v=10$ and $A=\{1,3,8,9\}$ for example.
  In order $1,3,8,9$, the numbers added to $A'$ would be $10,2,7,6$.
    In order $3,9,8,1$, the numbers added to $A'$ would be $2,7,6,10$.)

\medskip We define $\CWRS(A)$ symmetrically (by changing CCW to CW in Step~2).\smallskip

\medskip Recall that $\calP_t$ denotes the $t$-th level of the subset lattice of $[v]$, i.e., it contains all subsets of $[v]$ with $t$ elements.
For $t<v/2$ and $A\subset \calP_t$, define
\begin{equation}\label{eqn:def-gamma-gamma'}
  \gamma_t^v(A):=[v]-\CCWRS(A)\hbox{ and }{\gamma'}_t^v(A):=[v]-\CWRS(A).
\end{equation}
\end{definition}

Obviously, $\gamma_t^v$ and ${\gamma'}_t^v$ are two 1-factors of $H(v,t)$, and they are disjoint.

\begin{lemma}
When $v=2t+1$, we have $\gamma_t^v=\calL_{t+1}$ and ${\gamma'}_t^v=\calL_{t}$.
\end{lemma}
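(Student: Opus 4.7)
Plan: Both claims rest on identifying the singleton $\{x\}=[v]\setminus A\setminus\CCWRS(A)$ with the position of the unmatched left parenthesis in the canonical cyclic shift of $A$'s parenthesis representation. Combining this with Lemma~\ref{lemma:index-stronger} yields Part~1; Part~2 then follows by reversing the cycle.

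Let $S=(s_1,\ldots,s_v)$ be the parenthesis representation of $A$, so that $s_i$ is a right parenthesis iff $i\in A$. By Lemma~\ref{lemma:cyclic-unique} there is a unique $x$ such that $S^{(x)}$ is canonical; its last position corresponds to $s_x$ in $S$ and is the unique unmatched left parenthesis. For Part~1, I first argue that $\CCWRS(A)$ equals the set of matched left-parenthesis positions of this canonical shift: the iterative rule of Definition~\ref{def:1-factor} pairs each $a\in A$ with the nearest available left parenthesis to its CCW side, which is precisely a stack-based matching on $S^{(x)}$, and the order-independence noted in Definition~\ref{def:1-factor} forces the output to coincide with the standard matching. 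Hence $\gamma_t^v(A)=[v]\setminus\CCWRS(A)=A\cup\{x\}$. In the canonical shift, $s_x$ sits at the last (rightmost) position and no matched pair can cover it, so it has depth $0$; by Definition~\ref{def:index} it receives index~$0$, while every matched left parenthesis has depth~$\geq 1$ and therefore index~$\geq 1$, and so does its paired right parenthesis---that is, every element of $A$. For the edge-permutation $\rho$ of $(A,A\cup\{x\})$, Lemma~\ref{lemma:index-stronger} then shows that a $\fullmoon$ at position $r\in A$ is positive iff $\ind(s_x)\geq\ind(s_r)$, i.e., iff $0\geq 1$, which fails. So no $\fullmoon$ is positive, $\flex(\rho)\equiv 0\pmod{t+1}$, i.e., $\flex(\rho)=t+1$, and $(A,\gamma_t^v(A))\in\calL_{t+1}$. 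As this holds for every $A\in\calP_t$ and both $\gamma_t^v$ and $\calL_{t+1}$ are 1-factors, $\gamma_t^v=\calL_{t+1}$.

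For Part~2, I invoke the reversal $\pi(i)=v+1-i$ on $[v]$, which swaps CW with CCW and therefore satisfies $\CWRS(A)=\pi(\CCWRS(\pi(A)))$; this gives ${\gamma'}_t^v(A)=\pi(\gamma_t^v(\pi(A)))=A\cup\{x'\}$, where $x'=\pi(\bar x)$ and $\bar x$ is the unmatched left parenthesis attached to $\pi(A)$. Let $\rho$ and $\bar\rho$ be the edge-permutations of $(A,A\cup\{x'\})$ and $(\pi(A),\pi(A)\cup\{\bar x\})$ respectively; then $\rho$ is the cyclic reversal of $\bar\rho$ under $\pi$, so ``CW in $\rho$'' equals ``CCW in $\bar\rho$''. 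Applying Part~1 to $\pi(A)$ gives $\flex(\bar\rho)=t+1$, i.e., no $\fullmoon$ in $\bar\rho$ is CW-positive. I would then record a simple dichotomy: for any edge-permutation and any $\fullmoon$ at a position $r$, the CW and CCW intervals from $\times$ to $r$ (both including $r$) together cover every position except that of $\times$, with $r$ double-counted; consequently the CW-$\fullmoon$ and CCW-$\fullmoon$ counts sum to $t+1$ while the CW-$\triangle$ and CCW-$\triangle$ counts sum to $t$, and a one-line inequality check shows that the $\fullmoon$ at $r$ is CW-positive iff it is not CCW-positive. Applying this to $\bar\rho$, every $\fullmoon$ in $\bar\rho$ is CCW-positive, hence every $\fullmoon$ in $\rho$ is CW-positive. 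Thus $\flex(\rho)=t$ and $(A,{\gamma'}_t^v(A))\in\calL_t$, so ${\gamma'}_t^v=\calL_t$.

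The main obstacle is the first identification in Part~1: that the iterative, order-independent rule defining $\CCWRS$ actually realizes the standard canonical-shift parenthesis matching and thereby pinpoints $x$ as the unmatched left parenthesis. This blends the ``order does not matter'' note in Definition~\ref{def:1-factor} with a short stack-vs-iteration comparison. Once this identification is settled, the rest is a routine application of Lemma~\ref{lemma:index-stronger} for Part~1, together with the reversal symmetry and the CW/CCW dichotomy for Part~2.
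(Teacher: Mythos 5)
Your conclusion is correct, and your approach is genuinely different from (and considerably longer than) the paper's, which simply observes that with $\times$ placed at the unmatched `$($' of the canonical cyclic shift, every CW prefix from $\times$ has at least as many `$($'s as `$)$'s by definition of ``canonical,'' hence no $\fullmoon$ is positive and $\flex(\rho)=t+1$; the second identity is dispatched as ``similar.'' You instead route Part~1 through the index machinery of Definition~\ref{def:index} and Lemma~\ref{lemma:index-stronger}, and Part~2 through an explicit reversal map $\pi(i)=v+1-i$ together with a CW/CCW complementarity dichotomy. Both routes work, and your reversal argument is a nice, concrete way to avoid the paper's unstated ``by symmetry.'' Your identification of $\CCWRS(A)$ with the matched `$($'s of the canonical shift (via order-independence plus a stack-matching comparison) is informal but sound, and it is essentially the same step the paper elides.

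One actual error you should fix: the claim ``every matched left parenthesis has depth $\geq 1$'' is false --- top-level matched pairs have depth $0$ (see the paper's own indexed example, where several matched `$($'s carry depth $0$). Fortunately your downstream conclusion, that every matched `$($' has index $\geq 1$, is still correct, but for a different reason: the unmatched `$($' sits at the \emph{rightmost} position of the canonical shift and has depth $0$, and since same-depth parentheses are indexed right-to-left, it receives index $0$ while every other `$($' --- whether its depth is $0$ or higher --- receives a strictly larger index. Replacing the depth comparison with this position/ordering observation closes the gap.
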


\begin{proof}
We only show that $\gamma_t^v=\calL_{t+1}$. The other equation is similar.
Consider a subset $A\subset \calP_t$.
Replace all elements in $A$ by $\fullmoon$ and all the elements in $\CCWRS(A)$ by $\triangle$ and the remaining element by $\times$.
Clearly, this permutation is mapped to $t+1$ under $\flex$, because no $\fullmoon$ is positive.
This means $A$ is mapped to $[v]-\CCWRS(A)$ in $\calL_{t+1}$. Also, $A$ is mapped to $[v]-\CCWRS(A)$ in $\gamma_t^v$.  \qed
\end{proof}

In the following, we review a 1-factor $\beta_t^v$ of $H(v,t)$ and prove that $\beta_t^v=\alpha_t^v$.

\subsection{Definition of $\beta_t^v$ introduced in \cite{Bracket76}}

First, we review the chain-decomposition of the subset lattice given in \cite{Bracket76}.
Recall the parenthesis representation introduced in subsection~\ref{subsect:g_A-flex}.
The sequence of parentheses can be parenthesized uniquely in the usual way,
and there may remain several parenthesis unpaired. For example,
in $)_1,(_2,)_3,)_4,(_5,(_6,(_7,)_8,)_9,(_{10}$, ``$(_2$'' is paired with ``$)_3$'', ``$(_6$'' is paired with ``$)_9$'', and  ``$(_7$'' is paired with ``$)_8$''. All the others are unpaired.
Note that all the unpaired right parentheses always occur to the left of the unpaired left parentheses.\smallskip

\smallskip \noindent\textbf{\underline{Chain-decomposition of the subset lattice via parenthesizing}\cite{Bracket76}.}
Two subsets of $[v]$ are in the same chain, if and only if their associated parenthesis sequences contain the same paired parenthesis.
Equivalently, suppose $A\subset [v]$ is associated with sequence $S$.
Replace the leftmost unpaired '(' in $S$ by ')' and assume that the new sequence corresponds to subset $A'$. Then, $A'$ is the next member in the chain containing $A$.
  For the above example, the leftmost unpaired '(' is $(_5$, so $A'=\{1,3,4,5,8,9\}$.
  The entire chain in this example is $\{3,8,9\}\rightarrow \{1,3,8,9\}\rightarrow \{1,3,4,8,9\}\rightarrow \{1,3,4,5,8,9\}\rightarrow \{1,3,4,5,8,9,10\}.$

\smallskip Clearly, all chains in this decomposition are \emph{symmetric} -- if a chain contains a member $A$, it must contain a member with size $v-|A|$.
So, this chain-decomposition implicitly defines an antipodal matching $\beta_t^v$ between the antipodal layers $\calP_t$ and $\calP_{v-t}$ for each $t<v/2$.

\subsection{The equivalence between $\beta_t^v$ and $\gamma_t^v$.}

\begin{lemma}\label{lemma:gamma}
Assume $t<v/2$. We have $\beta_t^v(A)=\gamma_t^v(A)$ for any $A\in \calP_t$.
\end{lemma}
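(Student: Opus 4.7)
The plan is to give explicit descriptions of both $\beta_t^v(A)$ and $\CCWRS(A)$ via the linear parenthesization of $A$'s associated parentheses sequence $S$ (as in subsection~\ref{subsect:g_A-flex}), and then verify they agree. Let $P$ be the set of paired pairs of $S$, and let $x_1<\cdots<x_{u_l}$ (resp.\ $y_1<\cdots<y_{u_r}$) be the positions of the unpaired $($'s (resp.\ unpaired $)$'s) in $S$. Counting symbols gives $u_l-u_r=v-2t$, and the structural remark recalled in the excerpt---that every unpaired $)$ precedes every unpaired $($---gives $y_{u_r}<x_1$.

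The first step is to unwind the construction of $\beta_t^v$. Flipping the currently leftmost unpaired $($ in $S$ to $)$ preserves $P$ and simply turns the flipped position into a new unpaired $)$; iterating $v-2t$ times from $A$ reaches the chain member of size $v-t$, so
\[
\beta_t^v(A) \;=\; A\cup\{x_1,\ldots,x_{v-2t}\}, \qquad [v]\setminus\beta_t^v(A) \;=\; \{x:(x,y)\in P\}\cup\{x_{v-2t+1},\ldots,x_{u_l}\}.
\]
Call the right-hand set $B$.

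The second step is to compute $\CCWRS(A)$ using a convenient enumeration of $A$, which is permitted by the order-independence noted in Definition~\ref{def:1-factor}. I use a two-phase enumeration: first process the paired $)$'s in \emph{decreasing} depth (innermost first, ties arbitrary), then process the unpaired $)$'s in decreasing position $y_{u_r},\ldots,y_1$. A short induction for phase~1 (hypothesis: each already-processed $)$ has been matched to its unique linear $($ partner) shows that every paired $)$ matches its linear partner: the interior of $(x,y)$ is filled in by previously handled deeper pairs, and $x$ cannot have been matched by any earlier $)$ because $x$'s unique linear partner is $y$. A second induction for phase~2 (on the round $j=0,\ldots,u_r-1$) shows $y_{u_r-j}\leftrightarrow x_{u_l-j}$: every position below $y_{u_r-j}$ is either paired or an earlier unpaired $)$ (so in $A\cup A'$), and in the wrap-around region $[x_{u_l-j}+1,v]$ every position is either paired (in $A\cup A'$ from phase~1) or one of the already matched $x_{u_l-j+1},\ldots,x_{u_l}$, so the CCW walk first hits $x_{u_l-j}$. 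Phase~2 thus adds $\{x_{v-2t+1},\ldots,x_{u_l}\}$ to $A'$, giving $\CCWRS(A)=B$, and hence $\gamma_t^v(A)=[v]\setminus\CCWRS(A)=\beta_t^v(A)$. The main obstacle will be the phase~2 bookkeeping, namely verifying that the wrap-around region $[x_{u_l-j}+1,v]$ contains no available (unpaired, unmatched) position other than already-matched $x_k$'s; this hinges on the structural fact that unpaired $)$'s and unpaired $($'s occupy the disjoint intervals $[y_1,y_{u_r}]$ and $[x_1,x_{u_l}]$, leaving only paired positions between consecutive $x_k$'s and past $x_{u_l}$.
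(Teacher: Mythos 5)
Your proof is correct and takes essentially the same approach as the paper: both arguments hinge on the paired/unpaired decomposition of the parenthesization of $A$ and identify $\CCWRS(A)$ as the paired left-parenthesis positions together with the topmost block of unpaired left-parenthesis positions, with the unpaired right parentheses rotating CCW past the wrap-around to pick them up. Your two-phase enumeration (innermost-first over paired right parentheses, then right-to-left over unpaired ones, legitimized by the order-independence note in Definition~\ref{def:1-factor}) gives an explicit inductive derivation of the step the paper's proof states tersely (``The third follows by 1 and the definition of the CCW-rotating-subset'').
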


\newcommand{\PS}{\mathsf{PS}}
\begin{proof}
We shall prove that $\beta_t^v(A)=[v]-\CCWRS(A)$.
We first prove it by an example and then give the formal proof.
Let $\PS(A)$ denote the \emph{parenthesis sequence} associated with $A$.

\begin{example}
$v=11,A=\{1,3,4,8,9\}$. The sequence of parentheses associated with $A$ is:
$$\mbox{\qquad\qquad}\PS(A)=
\begin{array}{ccccccccccc}
\frame{)}_1&(_2&)_3&\frame{)}_4&\frame{(}_5&(_6&(_7&)_8&)_9&\frame{(}_{10}&\frame{(}_{11}.
\end{array}$$
The unpaired parentheses are boxed for ease of distinction.

There are two unmatched right parentheses and three unmatched left parentheses.
According to the definition of the chain-decomposition, in its symmetric member $\beta_5^{11}(A)$ we should replace the first unmatched left parenthesis by a right parenthesis.
So,
$$\mbox{\qquad}\PS(\beta_5^{11}(A))=
\begin{array}{ccccccccccc}
\frame{)}_1&(_2&)_3&\frame{)}_4&\frame{)}_5&(_6&(_7&)_8&)_9&\frame{(}_{10}&\frame{(}_{11}.
\end{array}$$

Then, let us also compute $\CCWRS(A)$ and $[v]-\CCWRS(A)$.
(In the following, the positions of boxes stay the same as above; they do not indicate the unpaired parentheses.)
$$\qquad \PS(\CCWRS(A))=
\begin{array}{ccccccccccc}
\frame{(}_1&)_2&(_3&\frame{(}_4&\frame{(}_5&)_6&)_7&(_8&(_9&\frame{)}_{10}&\frame{)}_{11}.
\end{array}$$
$$\PS([v]-\CCWRS(A))=
\begin{array}{ccccccccccc}
\frame{)}_1&(_2&)_3&\frame{)}_4&\frame{)}_5&(_6&(_7&)_8&)_9&\frame{(}_{10}&\frame{(}_{11}.
\end{array}$$

We see $\PS(\beta_5^{11}(A))=\PS([v]-\CCWRS(A))$. Therefore, $\beta_5^{11}(A)=[v]-\CCWRS(A)$.
\end{example}

For any $i\in [v]$, we shall prove that $\hbox{(X) $i\in \beta_t^v(A)$ if and only if $i\in [v]-\CCWRS(A)$}.$
We discuss two cases distinguished by whether $i$ belongs to $U$, where $U$ is the set of unpaired positions of $\PS(A)$ (the positions are indexed by $1,\ldots,v$).

\smallskip Case~1: $i\notin U$. Then, the $i$-th parenthesis of $\PS(A)$ is paired. It will not change within the chain containing $A$ and $\beta_t^v(A)$.
Therefore, (I) $i\in \beta_t^v(A)$ if and only if $i\in A$.
On the other hand, by the definition of $\CCWRS(A)$, it easily follows that $i\in \CCWRS(A)$ if and only if $i\notin A$.
(In the example above, the paired number 3 in $A$ will go to 2 in $\CCWRS(A)$, the paired numbers 8 and 9 will go to 6 and 7 in $\CCWRS(A)$. So $i\in \CCWRS(A)$ if and only if $i\notin A$.)
Therefore, (II) $i\in [v]-\CCWRS(A)$ if and only if $i\in A$. Combine (I) and (II), we get statement (X).\smallskip

\smallskip Case~2: $i\in U$. Assume $\PS(A)$ has $r$ unpaired right parentheses and $l$ unpaired left parentheses.
For any sequence $S$ with length $v$, let $S^{(U)}$ denote the subsequence of $S$ that are located at $U$.
We state the following arguments about the parentheses locating at $U$.

\begin{enumerate}
\item $\PS(A)^{(U)}$ starts by $r$ `)'s and is followed by $l$ `('s.
\item $\PS(\beta_t^v(A))^{(U)}$ \underline{starts by $l$ `)'s and is followed by $r$ `('s}.
\item $\PS(\CCWRS(A))^{(U)}$ starts by $l$ `('s and is followed by $r$ `)'s.
\item $\PS([v]-\CCWRS(A))^{(U)}$ \underline{starts by $l$ `)'s and is followed by $r$ `('s}.
\end{enumerate}

The first argument is according to the assumption of $l$ and $r$. The second follows by 1 and the fact that $\beta_t^v(A)$ is the symmetric member of $A$ in the chain containing them.
The third follows by 1 and the definition of the CCW-rotating-subset. The last follows by the third.
According to 2 and 4, we obtain (X) for those $i$ in $U$ altogether.  \qed
\end{proof}

\subsection{Remarks and related work}

\begin{remark}
According to Lemma~\ref{lemma:gamma}, our definition of $\gamma_t^v$ essentially gives an \textbf{explicit} definition of the antipodal matching $\beta_t^v$,
  which was previously defined implicitly from the chain-decomposition.

In fact, \cite{Antipodal-code} presented an even more explicit definition of $\beta_t^v$ using Cycle Lemma \cite{CycleLemma}.
Based on their definition, they further showed that $\beta_t^v(A)$ can be computed in $O(v)$ time and $O(\log v)$ space.
We do not review their work in depth in this appendix.
  (Note: we believe that \cite{Antipodal-code} in fact discusses the other 1-factor ${\gamma'}_t^v$ rather than $\gamma_t^v$, but it is straightforward to extend their result to the symmetric 1-factor $\gamma_t^v=\beta_t^v$.)
\end{remark}

\noindent \underline{\textbf{An equivalent definition of the chain-decomposition}}
A few years earlier than \cite{Bracket76}, Aigner \cite{Lexicographic} proposed a greedy algorithm which can produce a matching $\lambda_t$ between two consecutive layers $\calP_t,\calP_{t+1}$.
The $v$ matchings $\lambda_0$, \ldots, $\lambda_{v-1}$ together describe a chain-decomposition of the subset lattice.
Interestingly, \cite{Recur77} pointed out that this decomposition is the same as the above one introduced in \cite{Bracket76} via parenthesizing. This was not mentioned in \cite{Bracket76}.

\bigskip \noindent \underline{\textbf{Yet Another definition of the chain-decomposition.}}
  Recently, another alternative definition for the above chain-decomposition was proposed in \cite{anotherchain}. However, their definition looks extremely complicated.
  We do not introduce it in this manuscript.

\begin{remark}
The bipartite graph $H(v,t)$ admits two disjoint 1-factors according to the fact that it is Hamiltonian \cite{MS17-H}.
Recently, Spink \cite{chain3} found out three orthogonal chain decompositions of the subset lattice, which implies three disjoint 1-factors of $H(v,t)$.
More recently, four orthogonal chain decompositions can be found for $v\geq 60$ \cite{chain4}.
\end{remark}

\clearpage

\section{Application: unique-supply hat-guessing games}\label{sect:hatguessing-motivation}

Hat-guessing games have been studied extensively in a broad area due to their relations to graph entropy, circuit complexity, network coding, and auctions \cite{Hat-auc1,Hat-auc2,Hat2,Hat-hamming-ebert,Hat-new,Hat3}.
In this appendix we show applications of the 1-factorization of the bipartite Kneser graphs
  in the following variant of hat-guessing game:

\bigskip \noindent $\spadesuit$ \textbf{Unique-supply hat-guessing game \cite{Hat-new}.}
There are $v$ hats, each with a different color in $[v]=\{1,\ldots,v\}$
(so for each color there is only one hat supplied).
The \emph{hat guessing game} is played by $m$ players and one dealer (who is the nature).
\begin{itemize}
\item The dealer randomly places $t$ hats to each player (assume $v-mt=d>0$).\smallskip
\item Each player can observe those hats placed to any other player, but cannot see and has to guess the $t$ colors of hats placed to himself or herself.\smallskip
\item The guess is private between one player and the dealer -- players are forbidden to communicate during the whole game. Yet it is permissible for the players to discuss a strategy before the game starts.\smallskip
\item Player $i~(i\in [m])$ is allowed to guess $g_i$ times. A guess is correct if all the $t$ colors are correct.
   If any guess of any player is correct, all players (as a team) win the game. All the parameters are given before the game starts.\smallskip
\item[Q.] How can we design a strategy to achieve the optimal chance of winning?
\end{itemize}

\begin{example}
$v=3,m=2,t=d=g_1=g_2=1$.
\emph{If Player~1 observes $b$, she guesses $b\mod 3+1$. If Player~2 observes $a$, he guesses $a\mod 3+1$.}
Then, exactly one player guesses right. This strategy wins always and is optimal.
\end{example}

The answer for the two players case (i.e.\ $m=2$) is as follows.
\begin{description}
\item[Graph Model.] Let $A$, $B$ respectively denote the set of colors placed to Player~1 and Player~2. Let $A'=[v]-B$.
The state of the game can be represented as edge $(A,A')$ in $H(v,t)$.
Each player knows one node of the edge; Player~1 knows $A'$ and Player~2 knows $A$.
\smallskip
\item[Upper bound.] The uncertainty for each player is ${t+d\choose d}$. This is the degree of each node.
    By one guess, a player has $1/{t+d\choose d}$ chance to win.
Therefore, the maximum winning probability is no larger than $p=\max\{1,(g_1+g_2)/{t+d\choose d}\}$.\smallskip
\item[Lower bound.] Suppose a 1-factorization of $H(v,t)$ labels each edge by a number in $[{t+d\choose d}]$.
In the $g_1+g_2$ guesses, by respectively choosing the edges with labels $1,\ldots,g_1+g_2$,
  the players win if the label of the edge (state) is in $[g_1+g_2]$, which occurs with probability $p$.
\end{description}

\begin{remark} To play this game, both players wish to have a simple and realistic strategy that is easy to remember.
    This gives us a motivation to design an explicit 1-factorization of $H(v,t)$.
    We also point out that our algorithm for solving P1 and P2 in section~\ref{sect:lexical} find applications
  in this game, because the following task arises in playing the game:
  Given $A$ (or $A'$) and a number $l\in [{t+d\choose d}]$,
    find the unique $A'$ (or $A$) such that $(A,A')$ is labeled with $l$ in the factorization.
\end{remark}
\clearpage
\end{document}